\documentclass[journal,10 pt]{IEEEtran}
\usepackage{bm,cite,algorithm,algorithmic,float,amsmath,amssymb}

\usepackage{cite,graphicx,amsmath,amssymb}
\usepackage{xcolor}
\usepackage{subfigure}
\usepackage{amsthm}
\usepackage{citesort}
\usepackage{fancyhdr}
\usepackage{mdwmath}
\usepackage{mdwtab}
\usepackage{framed}
\usepackage{comment}
\usepackage{graphicx}
\usepackage{epstopdf}

\newtheorem{lemma}{Lemma}
\newtheorem{cor}{Corollary}

\bibliographystyle{IEEEtran}
\IEEEoverridecommandlockouts

\usepackage{graphicx,epstopdf}
\usepackage{epsfig}	
\usepackage{amsfonts,balance}
\usepackage{bbm}
\floatname{algorithm}{Algorithm}
\setcounter{table}{0}



\usepackage{lipsum}
\usepackage{url}

\newtheorem{theorem}{Theorem}

\newtheorem{corollary}{Corollary}

\makeatletter
\def\ScaleIfNeeded{%
\ifdim\Gin@nat@width>\linewidth \linewidth \else \Gin@nat@width
\fi } \makeatother

\begin{document}

\title{Physical Layer Security in Three-Tier Wireless Sensor Networks: A Stochastic Geometry Approach}

\author{Yansha Deng,~\IEEEmembership{Student Member,~IEEE}, Lifeng Wang,~\IEEEmembership{Student Member,~IEEE}, Maged
Elkashlan,~\IEEEmembership{Member,~IEEE}, Arumugam Nallanathan,~\IEEEmembership{Senior~Member,~IEEE}, and Ranjan K. Mallik,~\IEEEmembership{Fellow,~IEEE}
\thanks{Copyright (c) 2013 IEEE. Personal use of this material is permitted. However, permission to use this material for any other purposes must be obtained from the IEEE by sending a request to pubs-permissions@ieee.org. Manuscript received April 27, 2015; revised Nov 21, 2015; accepted Jan 1, 2016.  This work was supported by the UK Engineering and Physical Sciences Research Council (EPSRC) with Grant No. EP/M016145/1. This paper was presented in part at the IEEE Global
Communications Conference, San Diego, CA, December 2015. The associate editor coordinating the review of this
manuscript and approving it for publication was Prof. T. Charles Clancy.}
\thanks{Y. Deng and A. Nallanathan are with the Department of Informatics, King's College London, London, UK (e-mail:\{yansha.deng, arumugam.nallanathan\}@kcl.ac.uk).}
\thanks{L. Wang is with the Department of Electronic and Electrical Engineering, University College London, London, UK (e-mail:  lifeng.wang@ucl.ac.uk). }
\thanks{ M. Elkashlan is with the School of Electronic Engineering and Computer Science, Queen Mary University of London, London, UK (e-mail:    maged.elkashlan@qmul.ac.uk) }
\thanks{ R. K. Mallik is with the Department of Electrical Engineering, Indian
Institute of Technology - Delhi, Hauz Khas, New Delhi 110016, India (e-mail: rkmallik@ee.iitd.ernet.in).}
}

\maketitle
\setcounter{page}{1} \thispagestyle{plain}

\begin{abstract}
This paper develops a tractable framework for exploiting the potential benefits of physical layer security in three-tier wireless sensor networks using stochastic geometry. In such networks, the sensing data from the remote sensors are collected by sinks with the help of access points, and the external eavesdroppers intercept the data transmissions.  We focus on the secure transmission in two scenarios: i) the active sensors transmit their sensing data to the access points, and ii) the active access points forward the data to the sinks.  We derive new compact expressions for the average secrecy rate in these two scenarios. We also derive a new compact expression for the overall average secrecy rate. Numerical results corroborate our analysis and show that
multiple antennas at the access points can enhance the security of three-tier 
wireless sensor networks.
Our results show that increasing the number of access points  decreases the average secrecy rate between the access point and its associated sink. However, we find that increasing the number of access points first increases the overall average secrecy rate, with a critical value beyond which the overall average secrecy rate then decreases.
 When increasing the number of active sensors, both the average secrecy rate between the sensor and its associated access point and the overall average secrecy rate decrease. In contrast, increasing the number of sinks improves both the average secrecy rate between the access point and its associated sink, as well as the overall average secrecy rate.


\end{abstract}

\begin{IEEEkeywords}
 Beamforming, decode-and-forward (DF), physical layer security,   stochastic geometry, wireless sensor networks (WSNs).
\end{IEEEkeywords}

\section{Introduction}
Due to its wide applications such as environmental sensing, health monitoring, and military communications~\cite{Akyildiz2002}, wireless sensor networks (WSNs) have attracted considerable attention from the industry and academia. The security of WSNs is a big concern, since the broadcast nature of wireless channels is susceptible to eavesdropping and the sensing data needs to be protected. In practice, the small-size, low-cost and low-power sensors are randomly deployed to sense the data, which is sent back to the sinks by multihop transmissions. Multihop architectures pose great challenges to conventional cryptographic methods involving key distribution and management, and result in high complexity in data encryption and decryption. Physical layer security has emerged as an appealing low-complexity approach to secure the information transmission. The core idea behind it is to exploit the characteristics of wireless channels such as fading or noise to transmit a message from a source to an
intended destination while keeping the message confidential from eavesdroppers. Motivated by this, the potential applications of physical layer security have been investigated in various wireless networks such as cellular networks, cognitive radio, ad-hoc, etc.

\subsection{Physical Layer Security:  Current State-of-the-Art}
In the 1970s, Aaron D. Wyner first introduced physical layer security~\cite{Wyner}. Triggered by the rapid evolution of wireless network architectures, the idea of enabling security at physical layer has drawn the attention of the wireless community~\cite{Poor2012}. In cellular networks, physical layer security is important
for adding an extra level of protection~\cite{Geraci_downlink,HeWang_2013}. In~\cite{Geraci_downlink}, secure downlink transmission in cellular networks was investigated, and the secrecy using linear precoding based on regularized channel inversion was examined. In multi-cell environments, the cell association and location information of mobile users play an important role in secrecy performance~\cite{HeWang_2013}. Although it can alleviate the scarcity of radio frequency spectrum, security of cognitive radio networks is critical as it is easily exposed to external threats~\cite{Pei:10:TWC}. In~\cite{Pei:10:TWC}, the optimal secrecy beamforming in a multiple-input single-output (MISO) cognitive radio wiretap channel was proposed. 
  In cooperative networks, relays are deployed to boost the coverage and reliability, however, the relay can be trusted~\cite{LunDong,Yulong_JSAC} or untrusted~\cite{XiangHe2008,Lifeng_WCL} where the untrusted relay is thought of as an eavesdropper. In~\cite{LunDong}, the design of trusted relay weights and allocation of transmit power under different relay protocols such as amplify-and-forward (AF), decode-and-forward (DF), and cooperative jamming (CJ) was considered. In~\cite{Yulong_JSAC}, trusted relay selection schemes based on the AF and DF protocols were proposed to improve physical layer security. In untrusted relay networks, CJ was introduced to confuse the untrusted relay~\cite{XiangHe2008}. Joint power allocation and CJ was developed in~\cite{Lifeng_WCL}, and it was shown that a positive secrecy rate can be guaranteed. In decentralized networks such as ad-hoc, the public-key cryptography is expensive and difficult~\cite{Zhou:TWC:2011_feb,Xiangyun_2011_Aug,XiZhang_2013_TIFS}. In \cite{Zhou:TWC:2011_feb}, the secure connectivity in wireless random networks was studied, and the eigen-beamforming was implemented to maximize the signal strength to the intended receiver. In \cite{Xiangyun_2011_Aug}, the secrecy transmission capacity in wireless ad-hoc networks was analyzed, and the secrecy guard zone was introduced to improve the secrecy transmission capacity. In~\cite{XiZhang_2013_TIFS}, the transmit beamforming with artificial noise strategies were used to enhance the secrecy in large-scale ad-hoc networks.

Physical layer security schemes have been recently proposed for WSNs to combat eavesdropping~\cite{Xiaohua2005,Marano2009,Soosahabi2012,Barcelo_Llado2014}. In \cite{Xiaohua2005}, the downlink secure transmission from the mobile agent to the authorized user was considered and perfect secrecy can be achieved by intentionally creating channel variation. In~\cite{Marano2009}, a detection problem under physical layer secrecy
constraints in an energy-constrained WSNs was addressed, and the optimal operative solutions were analyzed. In~\cite{Soosahabi2012}, sensor transmissions were observed by the authorized fusion center (FC) and unauthorized (third party) FC. It was shown that physical layer security for distributed detection is scalable due to its low computational complexity. More recently in~\cite{Barcelo_Llado2014}, compressed sensing (CS) was introduced to provide secrecy against eavesdropping in addition to the other CS benefits.

\subsection{Approach and Contributions}
In this paper, we examine the potential benefits of  physical layer security in a three-tier WSN using stochastic geometry modeling. In three-tier WSNs, the sensors are located far from the sinks, and the access points are deployed to help the sensors forward their data to the sinks.  Confidential information transmissions are intercepted by the eavesdroppers. Considering the fact that sensors are densely deployed and their locations are randomly distributed~\cite{Akyildiz2002}, we introduce stochastic geometry to model the locations of the nodes in WSNs. Such a modeling approach has been applied in heterogeneous networks~\cite{Dhillon2012} and cognitive radio networks~\cite{Chia_han2012}. Our main contributions are summarized as follows.

\begin{itemize}
  \item We develop a new analytical framework to examine the implementation of physical layer security in three-tier WSNs. The locations and spatial densities of sensors, access points, sinks, and eavesdroppers are modeled using stochastic geometry. Each access point is equipped with multiple antennas and uses the low-complexity maximal-ratio combining (MRC) to receive the  data signals from the sensors and maximal-ratio transmission (MRT) beamformer to transmit the signals. We investigate the secure transmissions between the active sensors and access points, and beween the active access points and sinks.

  \item We present new statistical properties, based on which we derive new compact expressions for the average secrecy rate between the typical sensor and its associated access point, and between the typical access point and its associated sink.  We also derive the minimum number of sinks required for a target average secrecy rate. Particularly, we derive a new compact expression for overall average secrecy rate in three-tier WSNs.

  \item We show that using MRC/MRT at access points can enhance the secure transmission. Based on the proposed analysis and simulations, several important observations are reached: 1) the average secrecy rate decreases as the number of sensors grows large, due to more interference from sensors, 2) the average secrecy rate increases with increasing the number of sinks, because of the shorter distances between the access points and their associated sinks, and 3) the overall average secrecy rate increases with increasing the number of access points. However, beyond a critical value, the overall average secrecy rate decreases with increasing the number of access points.

\end{itemize}

The notation of this paper is given in Table~\ref{Tab1}.
\begin{table}[tbp]\label{Tab1}
\centering
\caption{Notation}
\begin{tabular}{|c|l|}
\hline
{$\Phi_{s,a}$} & Poison point process (PPP) of sensor locations\\ 
$\lambda_s$ & Intensity of $\Phi_s$ \\ 
$\Phi_{ap,a}$ & PPP of access points locations\\ 
$\lambda_{ap}$ & Intensity of $\Phi_{ap}$ \\ 
$\Phi_{sk}$	& PPP of sinks locations \\ 
$\lambda_{sk}$	& Intensity of $\Phi_{sk}$ \\ 
$\rho_s$	& The probability that sensor is triggered to transmit \\ & the data\\ 
$\rho_{ap}$	& The activity probability of access point that forwards \\ & the data to the sinks \\ 
$\Phi_{s,e}$ & PPP of eavesdropper locations, where the eavesdroppers \\ & intercept the sensors' data \\ 
$\Phi_{ap,e}$ & PPP of eavesdropper locations, where the eavesdroppers \\ & intercept the access points' data  \\ 
$\lambda_e^s$ & Intensity of $\Phi_{s,e}$  \\ 
$\lambda_e^{ap}$ & Intensity of $\Phi_{ap,e}$  \\ 
$\dag$   & Conjugate transpose \\
\hline
\end{tabular}
\label{table:2}
\end{table}

\section{System Description}

\begin{figure}[t!]
    \begin{center}
        \includegraphics[width=2.8 in]{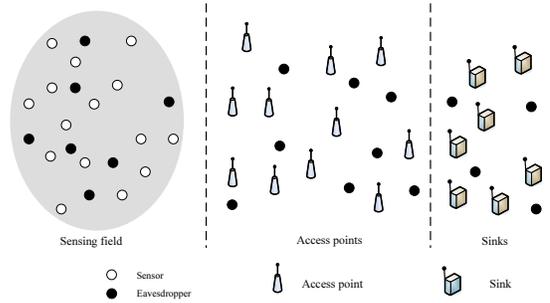}
        \caption{Illustration of three-tier wireless sensor networks, where the sensors transmit the data to the sinks via the access points, in the presence of eavesdropping.}
        \label{wireless_sensor_networks}
    \end{center}
\end{figure}
As shown in Fig. \ref{wireless_sensor_networks}, a three-tier WSN is considered, where the geographically remote sensors transmit the sensed data to the sinks with the help of half-duplex decode-and-forward (DF) access points with no direct links between sensors and sinks. The eavesdroppers overhears the data transmission without modifying it. In the sensing field, sensors are randomly located according to a homogeneous Poisson point process (HPPP) $\Phi_{s}$ with intensity $\lambda_{s}$. 
In order to consider unplanned deployment of the access points and sinks, the random locations of 
the access points and sinks are approximated  as independent HPPPs $\Phi_{ap}$ and $\Phi_{sk}$ with intensities $\lambda_{ap}$ and $\lambda_{sk}$, respectively, which is suitable in large scale networks \cite{taesoo2013random}. Since the sensors may transmit data intermittently, the activity probability of a sensor that is triggered to transmit the data is denoted as $\rho_s$ ($0<\rho_s<1$), and the activity probability of an access point that forwards the data to the sink is denoted as $\rho_{ap}$ ($0<\rho_{ap}<1$)\footnote{{In  practical scenarios, the access points operate in three modes: receiving the sensed data from active sensors, forwarding the sensed data to the sinks, and idle. The activity probability of sensor only determines the
number of access points which receive the data from the active sensors, and is independent of the number of access points which forward the data to the sink. The number of active access points that are triggered to forward the sensed data to sinks depends on the availability of sinks. 
As such,  $\rho_s$ and $\rho_{ap}$ are independent values.}}.  
We assume that the probability of being an active sensor/access point is independent of the access point/sink's location. Therefore,  the active sensors  and active access points  constitute  independent HPPPs  $\Phi_{s,a}$ and $\Phi_{ap,a}$ with intensities $\lambda_{s}\rho_s$ and $\lambda_{ap}\rho_{ap}$, respectively \cite{taesoo2013random}.
Non-colluding eavesdroppers are considered and eavesdroppers' locations are modeled as two independent HPPPs $\Phi_{s,e}$ and $\Phi_{ap,e}$ with intensities $\lambda_e^s$ and $\lambda_e^{ap}$, respectively.  The eavesdroppers in $\Phi_{s,e}$ intercept the data transmitted by the sensors and the eavesdroppers in $\Phi_{ap,e}$ intercept the data transmitted by the access points. Note that  the eavesdroppers in $\Phi_{s,e}$ and in $\Phi_{ap,e}$ are far from each other.

In this three-tier network, the sensor is associated with its nearest access point to receive the sensor's data and the access point is associated with its nearest sink to receive the access point's data \footnote{{In reality, there may be more than one active sensor/access point to choose the same access point/sink; this can be effectively dealt with using multiple access techniques.}}.  Each access point is equipped with $M$ antennas, and the sensors and sinks are single-antenna nodes. To enhance the information transmission, the access points use MRC to receive the sensors' data signals and MRT beamformer to transmit the signals.
 The wireless channels are modeled as independent quasi-static Rayleigh fading.

An arbitrary typical sensor $o$ transmits data to its nearest access point (called typical access point). The typical access point not only receives the useful data from the typical sensor, but is also subject to the interference from  other active sensors  and   active access points. Thus, the receive signal-to-interference-plus-noise ratio (SINR) after MRC at its corresponding typical access point is given by
\begin{align}\label{SINR_AP}
{\gamma _{ap}} = \frac{{{{\left\| {{{\bf{h}}_{{s_0},a{p_0}}}} \right\|}^2}{{\left| {{X_{{s_0},a{p_0}}}} \right|}^{ - \alpha }}}}{{\underbrace {{I_{s,ap}} + {{I_{ap,ap}}}}_{I{n_{ap}}} + {{{\delta ^2}} \mathord{\left/
 {\vphantom {{{\delta ^2}} {{P_s}}}} \right.
 \kern-\nulldelimiterspace} {{P_s}}}}},
\end{align}
where ${I_{s,ap}} = {\sum _{i \in {\Phi _{s,a}}\backslash \left\{ {{s_0}} \right\}}}{\left| {\frac{{{{\bf{h}}_{{s_0},a{p_0}}}^\dag }}{{\left\| {{{\bf{h}}_{{s_0},a{p_0}}}} \right\|}}{{\bf{h}}_{i,a{p_0}}}} \right|^2}{\left| {{X_{i,a{p_0}}}} \right|^{ - \alpha }}$, ${I_{ap,ap}} = \mu{\sum _{j \in {\Phi _{ap,a}}\backslash \left\{ {a{p_0}} \right\}}}{\left| {\frac{{{{\bf{h}}_{{s_0},a{p_0}}}^\dag }}{{\left\| {{{\bf{h}}_{{s_0},a{p_0}}}} \right\|}}{{\bf{H}}_{j,a{p_0}}}\frac{{{{\bf{h}}_{j,s{k_j}}}^\dag }}{{\left\| {{{\bf{h}}_{j,s{k_j}}}} \right\|}}} \right|^2}{\left| {{X_{j,a{p_0}}}} \right|^{ - \alpha }}$,
and $\mu  = {{{P_{ap}}} \mathord{\left/
 {\vphantom {{{P_{ap}}} {{P_s}}}} \right.
 \kern-\nulldelimiterspace} {{P_s}}}$. Note that the interfering access points deliver their own data to their corresponding sinks using MRT beamformer vector  $\frac{{{{\bf{h}}_{j,s{k_j}}}^\dag }}{{\left\| {{{\bf{h}}_{j,s{k_j}}}} \right\|}}$,  which are also received and combined at the typical access point with MRC vector $\frac{{{{\bf{h}}_{{s_0},a{p_0}}}^\dag }}{{\left\| {{{\bf{h}}_{{s_0},a{p_0}}}} \right\|}}$.
Here,    ${{{\bf{h}}_{{s_0},a{p_0}}}}$ and ${\left| {{X_{{s_0},a{p_0}}}} \right|}$ are the channel fading vector and distance between the typical sensor and its typical access point, respectively,  $\alpha$ is the path loss exponent, ${{{\bf{h}}_{i,a{p_0}}}}\in\mathcal{C}^{M \times 1}$ and ${\left| {{X_{i,a{p_0}}}} \right|}$ are the channel fading vector and distance between the sensor $i$ and the typical access point, respectively,  ${{\bf{H}}_{j,a{p_0}}}$ and ${\left| {{X_{j,a{p_0}}}} \right|}$ are the channel fading matrix and distance between the interfering access point $j$ and the typical access point, respectively, ${\bf{h}}_{j,s{k_j}} \in\mathcal{C}^{1 \times M}$ is the channel fading vector between the interfering access point $j$ and its corresponding sink, $P_s$ is the sensor's transmit power, $P_{ap}$ is the access point's transmit power, and ${\delta^2}$ is the noise power.

We consider the non-colluding eavesdropping scenario, in which the most detrimental eavesdropper that has the highest receive SINR dominates the secrecy rate~\cite{LunDong}. {An arbitrary eavesdropper $e_k$ that intercepts the sensor and the access point transmission overhears
 the useful signal from the typical sensor to the typical access point, and simultaneously receives the interfering  data from the other  active sensors and  active access points.   This eavesdropper suffers from the interfering signals emitted by the other interfering access points using the MRT beamformer 
$\frac{{{{\bf{h}}_{j,s{k_k}}}^\dag }}{{\left\| {{{\bf{h}}_{j,s{k_k}}}} \right\|}}$.} Thus, the received SINR  at the most detrimental eavesdropper in $\Phi_{s,e}$ for the sensor and the access point transmission  is given by
\begin{align}\label{SINR_Sensor_Eve}
{\gamma _{s,e}} = \mathop {\max }\limits_{{e_k} \in {\Phi _{s,e}}} \left\{ {\frac{{{{\left| {{h_{{s_0},{e_k}}}} \right|}^2}{{\left| {{X_{{s_0},{e_k}}}} \right|}^{ - \alpha }}}}{{\underbrace {{I_{s,e}} + {I_{ap,e}}}_{I{n_{s,e}}} + {{{\delta ^2}} \mathord{\left/
 {\vphantom {{{\delta ^2}} {{P_s}}}} \right.
 \kern-\nulldelimiterspace} {{P_s}}}}}} \right\},
\end{align}
where  ${I_{s,e}} = {\sum _{i \in {\Phi _{s,a}}\backslash \left\{ {{s_0}} \right\}}}{\left| {{h_{i,{e_k}}}} \right|^2}{\left| {{X_{i,{e_k}}}} \right|^{ - \alpha }}$ and ${I_{ap,e}} = {\sum _{j \in {\Phi _{ap,a}}\backslash \left\{ {a{p_0}} \right\}}}\mu {\left| {{{\bf{h}}_{j,{e_k}}}\frac{{{{\bf{h}}_{j,s{k_j}}}^\dag }}{{\left\| {{{\bf{h}}_{j,s{k_j}}}} \right\|}}} \right|^2}{\left| {{X_{j,{e_k}}}} \right|^{ - \alpha }}$,
${{h_{{s_0},{e_k}}}}$ and ${\left| {{X_{{s_0},{e_k}}}} \right|}$ are the channel fading coefficient and distance between the typical sensor and the eavesdropper, respectively, ${{h_{i,{e_k}}}}$ and ${\left| {{X_{i,{e_k}}}} \right|}$ are the channel fading coefficient and distance between sensor $i$ and the eavesdropper, respectively, and ${{{\bf{h}}_{j,{e_k}}}}$ and ${\left| {{X_{j,{e_k}}}} \right|}$ are the channel fading vector and distance between the access point $j$ and the eavesdropper, respectively.

 After receiving the typical  sensor's data, the typical  access point ${ap}_0$ will forward the sensed data to the nearest sink (called typical sink) $sk_0$ for data collection. { Due to the current transmission from other active access points, the typical sink suffers from their interferences. } As such, the received SINR at the typical sink $sk_0$ is given by
\begin{align}\label{SINR_sink}
{\gamma _{sk}} = \frac{{{{\left\| {{{\bf{g}}_{a{p_0},s{k_0}}}} \right\|}^2}{{\left| {{X_{a{p_0},s{k_0}}}} \right|}^{ - \beta }}}}{{I{n_{ap,sk}} + {{{\delta ^2}} \mathord{\left/
 {\vphantom {{{\delta ^2}} {{P_{ap}}}}} \right.
 \kern-\nulldelimiterspace} {{P_{ap}}}}}},
\end{align}
where $I{n_{ap,sk}} = {\sum _{j \in {\Phi _{ap,a}}\backslash \left\{ {a{p_0}} \right\}}}{\left| {{{\bf{g}}_{j,s{k_0}}}\frac{{{{\bf{h}}_{j,s{k_j}}}^\dag }}{{\left\| {{{\bf{h}}_{j,s{k_j}}}} \right\|}}} \right|^2}{\left| {{X_{j,s{k_0}}}} \right|^{ - \beta }}$, ${{{\bf{g}}_{a{p_0},s{k_0}}}}\in\mathcal{C}^{1 \times M}$ and ${\left| {{X_{a{p_0},s{k_0}}}} \right|}$ are the channel fading vector and distance between the typical access point and its typical sink, respectively, $\beta$ is the path loss exponent, ${{{\bf{g}}_{j,s{k_0}}}}\in\mathcal{C}^{1 \times M}$ and ${\left| {{X_{j,s{k_0}}}} \right|}$ are the channel fading vector and distance between the access point $j$ and the typical sink, and ${{{\bf{h}}_{j,s{k_j}}}}\in\mathcal{C}^{1 \times M}$ is the channel fading vector between the access point $j$ and its associated sink. 

 {An arbitrary eavesdropper $e_k$ that intercepts the typical access point and the typical sink transmission overhears
 the signal transmitted by the typical access point with the MRT beamformer $\frac{{{{\bf{g}}_{a{p_0},s{k_0}}}^\dag }}{{\left\| {{{\bf{g}}_{a{p_0},S{k_0}}}} \right\|}}$, and suffers from the interfering signals emitted by  other interfering access points with the MRT beamformer 
$\frac{{{{\bf{h}}_{j,s{k_k}}}^\dag }}{{\left\| {{{\bf{h}}_{j,s{k_k}}}} \right\|}}$.} Thus,
the received SINR   at the most detrimental eavesdropper  for the access point and the sink transmission is given by
 \begin{align}\label{SINR_AP_EVE}
 {\gamma _{ap,{e}}} = \mathop {\max }\limits_{{e_k} \in {\Phi _{ap,e}}} \Bigg\{ {\frac{{{{\left| {{{\bf{g}}_{a{p_0},{e_k}}}\frac{{{{\bf{g}}_{a{p_0},s{k_0}}}^\dag }}{{\left\| {{{\bf{g}}_{a{p_0},S{k_0}}}} \right\|}}} \right|}^2}{{\left| {{X_{a{p_0},{e_k}}}} \right|}^{ - \beta }}}}{{I{n_{ap,e}} + {{{\sigma ^2}} \mathord{\left/
 {\vphantom {{{\sigma ^2}} {{P_{ap}}}}} \right.
 \kern-\nulldelimiterspace} {{P_{ap}}}}}}} \Bigg\},
 \end{align}
where $I{n_{ap,e}} = {\sum _{j \in {\Phi _{ap,a}}\backslash \left\{ {a{p_0}} \right\}}}{\left| {{{\bf{g}}_{j,{e_k}}}\frac{{{{\bf{h}}_{j,s{k_k}}}^\dag }}{{\left\| {{{\bf{h}}_{j,s{k_k}}}} \right\|}}} \right|^2}{\left| {{X_{j,{e_k}}}} \right|^{ - \beta }}$, ${{g_{a{p_0},{e_k}}}}$ and ${\left| {{X_{a{p_0},{e_k}}}} \right|}$ are the channel fading coefficient and distance between the typical access point and the eavesdropper, respectively, and ${{{\bf{g}}_{j,{e_k}}}}$ and ${\left| {{X_{j,{e_k}}}} \right|}$ are the channel fading vector and distance between the access point $j$ and the eavesdropper, respectively.

\section{Secrecy Performance Evaluations}
In this section, we characterize the secrecy performance in terms of average secrecy rate. Before exhibiting the overall secrecy performance behaviors, we evaluate the secrecy of two different links, namely the link between the sensor and access point, and the link between the access point and sink. We derive new analytical expressions for the average secrecy rate, and analyze the impact of the two links on the overall average secrecy rate.

\subsection{Average Secrecy Rate between Sensor and Access Point}
 {We evaluate the average secrecy rate based on the worst-case, where the eavesdropper with the best SINR is used to calculate the  average secrecy rate~\cite{LunDong}.} Hence, for a typical link between a typical sensor and its associated access point, the instantaneous secrecy rate is defined as~\cite{Yuksel2011}
\begin{align}\label{secrecy_rate_Eq_s}
{C_s^{ap}} =[ { {{C_{ap}} - {C_{s,e}}}  }]^+,
\end{align}
where $[x]^+=\mathrm{max}\{x,0\}$, $C_{ap}=\log_2\left(1+{\gamma _{ap}}\right)$ is the capacity of the channel between the typical sensor and access point, and $C_{s,e}=\log_2\left(1+{\gamma _{s,e}}\right)$ is the capacity of the eavesdropping channel between the typical sensor and the most detrimental eavesdropper.

 \subsubsection{New Statistics}
We derive the cumulative distribution functions (CDFs) of SINRs at the typical access point and the most detrimental eavesdropper that intercepts the transmission between the typical sensor and the  access point  in 
{\bf{Lemma \ref{Lemma1}}}   and {\bf{Lemma \ref{Lemma2}}}, respectively.

\begin{lemma}\label{Lemma1}
The CDF of SINR at the typical access point is derived as \eqref{CDFAP} at the top of next page.
\begin{figure*}[t]
\normalsize
\begin{align}\label{CDFAP}
&{F_{{\gamma _{ap}}}}\left( {{\gamma _{th}}} \right) = 1 - 2\pi {\lambda _{ap}}\left( {1 - {\rho _{ap}}} \right)\int_0^\infty  {r\exp \Big\{ { - \left( {{\lambda _s}{\rho _s} + {\lambda _{ap}}{\rho _{ap}}{\mu ^{\frac{2}{\alpha }}}} \right)} } \pi \Gamma \left( {1 + {2 \mathord{\left/
 {\vphantom {2 \alpha }} \right.
 \kern-\nulldelimiterspace} \alpha }} \right) {\Gamma \left( {1 - {2 \mathord{\left/
 {\vphantom {2 \alpha }} \right.
 \kern-\nulldelimiterspace} \alpha }} \right){{\left( {{\gamma _{th}}} \right)}^{\frac{2}{\alpha }}}{r^2}}{ - {{{\gamma _{th}}{r^\alpha }{\delta ^2}} \mathord{\left/
 {\vphantom {{{\gamma _{th}}{r^\alpha }{\delta ^2}} {{P_s}}}} \right.
 \kern-\nulldelimiterspace} {{P_s}}}}\nonumber\\
 &{ - \pi {\lambda _{ap}}\left( {1 - {\rho _{ap}}} \right){r^2}}\Big\}dr - 2\pi {\lambda _{ap}}\left( {1 - {\rho _{ap}}} \right)\sum\limits_{m = 1}^{M - 1} {\frac{{{{\left( {{r^\alpha }} \right)}^m}}}{{{{\left( { - 1} \right)}^m}}}\sum {\frac{1}{{\prod\limits_{l = 1}^m {{m_l}!l{!^{{m_l}}}} }}} } \int_0^\infty  {r\exp \Big\{ { - \left( {{\lambda _s}{\rho _s} + {\lambda _{ap}}{\rho _{ap}}{\mu ^{\frac{2}{\alpha }}}} \right)\pi \Gamma \left( {1 + {2 \mathord{\left/
 {\vphantom {2 \alpha }} \right.
 \kern-\nulldelimiterspace} \alpha }} \right)} }\nonumber\\
&{\Gamma \left( {1 - {2 \mathord{\left/
 {\vphantom {2 \alpha }} \right.
 \kern-\nulldelimiterspace} \alpha }} \right){{\left( {{\gamma _{th}}} \right)}^{\frac{2}{\alpha }}}{r^2} - {{{\gamma _{th}}{r^\alpha }{\delta ^2}} \mathord{\left/
 {\vphantom {{{\gamma _{th}}{r^\alpha }{\delta ^2}} {{P_s}}}} \right.
 \kern-\nulldelimiterspace} {{P_s}}} - \pi {\lambda _{ap}}\left( {1 - {\rho _{ap}}} \right){r^2}} \Big\} \Big[ { - {2 \mathord{\left/
 {\vphantom {2 \alpha }} \right.
 \kern-\nulldelimiterspace} \alpha }\left( {{\lambda _s}{\rho _s} + {\lambda _{ap}}{\rho _{ap}}{\mu ^{\frac{2}{\alpha }}}} \right)} {\pi \Gamma \left( {1 + {2 \mathord{\left/
 {\vphantom {2 \alpha }} \right.
 \kern-\nulldelimiterspace} \alpha }} \right)\Gamma \left( {1 - {2 \mathord{\left/
 {\vphantom {2 \alpha }} \right.
 \kern-\nulldelimiterspace} \alpha }} \right){{\left( {{\gamma _{th}}} \right)}^{{2 \mathord{\left/
 {\vphantom {2 \alpha }} \right.
 \kern-\nulldelimiterspace} \alpha }}}}\nonumber\\
 &{ {r^{\left( {2 - \alpha } \right)}}- {{{\gamma _{th}}{\delta ^2}} \mathord{\left/
 {\vphantom {{{\gamma _{th}}{\delta ^2}} {{P_s}}}} \right.
 \kern-\nulldelimiterspace} {{P_s}}}}\Big]^{m_1}\prod\limits_{l = 2}^m {\Big[ { - \left( {{\lambda _s}{\rho _s} + {\lambda _{ap}}{\rho _{ap}}{\mu ^{\frac{2}{\alpha }}}} \right)\pi \Gamma \left( {1 + {2 \mathord{\left/
 {\vphantom {2 \alpha }} \right.
 \kern-\nulldelimiterspace} \alpha }} \right)\Gamma \left( {1 - {2 \mathord{\left/
 {\vphantom {2 \alpha }} \right.
 \kern-\nulldelimiterspace} \alpha }} \right)}} {{{\left( {{\gamma _{th}}} \right)}^{{2 \mathord{\left/
 {\vphantom {2 \alpha }} \right.
 \kern-\nulldelimiterspace} \alpha }}}\prod\limits_{j = 0}^{l - 1} {\left( {{2 \mathord{\left/
 {\vphantom {2 \alpha }} \right.
 \kern-\nulldelimiterspace} \alpha } - j} \right){r^{2 - l\alpha }}} }\Big]^{m_l}dr,\\
 &\mathrm{where} \sum\limits_{l{\rm{ = }}1}^m {l\cdot{m_l}}=m. \nonumber
\end{align}

\hrulefill \vspace*{0pt}
\end{figure*}

\end{lemma}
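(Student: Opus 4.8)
The plan is to compute $F_{\gamma_{ap}}(\gamma_{th}) = \Pr(\gamma_{ap} < \gamma_{th})$ by conditioning on the serving distance $r = |X_{s_0,ap_0}|$ and exploiting that, given $r$, the numerator gain $\|\mathbf{h}_{s_0,ap_0}\|^2$ is a unit-scale Gamma$(M,1)$ variate while the denominator is the aggregate interference-plus-noise $I_{n_{ap}}+\delta^2/P_s$. Writing the outage event as $\{\|\mathbf{h}_{s_0,ap_0}\|^2 < \gamma_{th} r^\alpha (I_{n_{ap}}+\delta^2/P_s)\}$ and inserting the Erlang CDF $\Pr(G<x)=1-\sum_{m=0}^{M-1}\frac{x^m}{m!}e^{-x}$ reduces everything to evaluating, for $m=0,\dots,M-1$, the quantities $\mathbb{E}_{I_{n_{ap}}}\big[(I_{n_{ap}}+\delta^2/P_s)^m e^{-s(I_{n_{ap}}+\delta^2/P_s)}\big]$ at $s=\gamma_{th}r^\alpha$, then averaging over $r$. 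Since $(I+\delta^2/P_s)^m e^{-s(I+\delta^2/P_s)} = (-1)^m \frac{d^m}{ds^m} e^{-s(I+\delta^2/P_s)}$, each such quantity is $(-1)^m$ times the $m$-th derivative of the Laplace transform $\mathcal{L}(s) = \mathbb{E}[e^{-s(I_{n_{ap}}+\delta^2/P_s)}]$, so the lemma hinges entirely on $\mathcal{L}(s)$ and its derivatives.

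First I would fix the serving-distance law. Because the typical access point is the nearest among those access points that are \emph{not} currently forwarding --- a set obtained by independent thinning of $\Phi_{ap}$ with probability $1-\rho_{ap}$, hence an HPPP of intensity $\lambda_{ap}(1-\rho_{ap})$ --- the void probability gives the Rayleigh nearest-neighbour density $f_R(r) = 2\pi\lambda_{ap}(1-\rho_{ap})\, r\, e^{-\pi\lambda_{ap}(1-\rho_{ap})r^2}$, supplying the prefactor $2\pi\lambda_{ap}(1-\rho_{ap})$ and the $-\pi\lambda_{ap}(1-\rho_{ap})r^2$ term inside every exponent in \eqref{CDFAP}. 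Next I would verify that the effective fading powers in $I_{s,ap}$ and $I_{ap,ap}$ are unit-mean exponential and conditionally independent of $\|\mathbf{h}_{s_0,ap_0}\|^2$: the scalar $\frac{\mathbf{h}_{s_0,ap_0}^\dag}{\|\mathbf{h}_{s_0,ap_0}\|}\mathbf{h}_{i,ap_0}$ is the projection of an isotropic $\mathcal{CN}(\mathbf{0},\mathbf{I})$ interferer vector onto a unit direction, hence itself $\mathcal{CN}(0,1)$ with $\mathrm{Exp}(1)$ squared magnitude \emph{regardless} of the direction, and likewise $\frac{\mathbf{h}_{s_0,ap_0}^\dag}{\|\mathbf{h}_{s_0,ap_0}\|}\mathbf{H}_{j,ap_0}\frac{\mathbf{h}_{j,sk_j}^\dag}{\|\mathbf{h}_{j,sk_j}\|}$ is $\mathcal{CN}(0,1)$. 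This decoupling is what lets the Gamma gain and the interference be treated separately.

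With the marks identified, I would compute $\mathcal{L}(s)$ through the probability generating functional of the two independent interfering HPPPs. For a planar HPPP of intensity $\lambda$ with i.i.d. marks $h$, the standard evaluation $2\pi\int_0^\infty(1-\mathbb{E}[e^{-shr^{-\alpha}}])r\,dr = \pi\,\mathbb{E}[h^{2/\alpha}]\,\Gamma(1-2/\alpha)\,s^{2/\alpha}$ applies, giving $\mathbb{E}[h^{2/\alpha}]=\Gamma(1+2/\alpha)$ for the sensor tier (intensity $\lambda_s\rho_s$) and $\mu^{2/\alpha}\Gamma(1+2/\alpha)$ for the access-point tier (intensity $\lambda_{ap}\rho_{ap}$, mark $\mu\cdot\mathrm{Exp}(1)$). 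Multiplying by the noise factor $e^{-s\delta^2/P_s}$ yields $\mathcal{L}(s) = \exp\!\big(-s\delta^2/P_s - C\, s^{2/\alpha}\big)$ with $C = \pi(\lambda_s\rho_s + \lambda_{ap}\rho_{ap}\mu^{2/\alpha})\Gamma(1+2/\alpha)\Gamma(1-2/\alpha)$. Substituting $s=\gamma_{th}r^\alpha$ makes $s^{2/\alpha}=(\gamma_{th})^{2/\alpha}r^2$ and, combined with $f_R$, reproduces exactly the exponent in \eqref{CDFAP}; the $m=0$ term then gives the first line of the lemma.

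The remaining $m\ge 1$ terms are the crux. Differentiating $\mathcal{L}(s)=e^{g(s)}$ with $g(s)=-s\delta^2/P_s - C s^{2/\alpha}$ requires Fa\`a di Bruno's formula, $\frac{d^m}{ds^m}e^{g} = e^{g}\sum \frac{m!}{\prod_{l} m_l!\,(l!)^{m_l}}\prod_{l=1}^{m}\big(g^{(l)}(s)\big)^{m_l}$, the sum running over nonnegative integers $m_l$ with $\sum_l l\,m_l=m$; this is precisely the partition sum $\sum \frac{1}{\prod_{l=1}^m m_l!\,l!^{m_l}}$ in \eqref{CDFAP} after the $m!$ cancels the $1/m!$ of the Erlang expansion. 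The explicit derivatives $g'(s)=-\delta^2/P_s-C\frac{2}{\alpha}s^{2/\alpha-1}$ and $g^{(l)}(s)=-C\prod_{j=0}^{l-1}(2/\alpha-j)\,s^{2/\alpha-l}$ for $l\ge2$, evaluated at $s=\gamma_{th}r^\alpha$ and with the leftover factor $(\gamma_{th})^m=\prod_l(\gamma_{th}^l)^{m_l}$ redistributed as $\gamma_{th}^l$ into each bracket, reproduce the $[\cdot]^{m_1}$ and $\prod_{l=2}^{m}[\cdot]^{m_l}$ factors verbatim. The main obstacle is exactly this combinatorial bookkeeping: keeping the Bell-polynomial indices aligned and splitting the $(\gamma_{th}r^\alpha)^m$ prefactor across the $g^{(l)}$ terms so that the powers of $\gamma_{th}$ and $r$ match; the stochastic-geometry inputs (nearest-neighbour law, PGFL, exponential marks) are comparatively routine. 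A final integration over $r$ of each term completes the derivation of \eqref{CDFAP}.
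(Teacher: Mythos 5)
Your proposal is correct and follows essentially the same route as the paper's Appendix~A: conditioning on the nearest-neighbour serving distance of the thinned access-point process, expanding the Erlang CDF of $\|\mathbf{h}_{s_0,ap_0}\|^2$, expressing the $m$-th moment terms as derivatives of the Laplace transform of $In_{ap}+\delta^2/P_s$ (obtained via the PGFL with unit-mean exponential effective marks), and unwinding the derivatives with Fa\`a di Bruno's formula. The only cosmetic difference is that you differentiate with respect to $s$ at $s=\gamma_{th}r^\alpha$ while the paper differentiates with respect to $x$ at $x=r^\alpha$, which is the same computation up to the chain-rule factor $\gamma_{th}^m$ that you correctly redistribute.
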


\begin{proof}
See Appendix A.
\end{proof}

\begin{lemma}\label{Lemma2}
The CDF of SINR at the most detrimental eavesdropper which intercepts the transmission between the typical sensor and the  access point is derived as
\begin{align}\label{CDFAP_E}
&{F_{{\gamma _{s,e}}}}\left( \gamma _{th}\right) = \nonumber \\
&\exp \left\{ { - \pi \lambda _e^s\int_0^\infty  {\exp \Big\{ { - \left( {{\lambda _s}{\rho _s} + {\lambda _{ap}}{\rho _{ap}}{\mu ^{{2 \mathord{\left/
 {\vphantom {2 \alpha }} \right.
 \kern-\nulldelimiterspace} \alpha }}}} \right)\pi } } } \right.
\nonumber \\
&\ {\Gamma \left( {1 + {2 \mathord{\left/
 {\vphantom {2 \alpha }} \right.
 \kern-\nulldelimiterspace} \alpha }} \right)\Gamma \left( {1 - {2 \mathord{\left/
 {\vphantom {2 \alpha }} \right.
 \kern-\nulldelimiterspace} \alpha }} \right){{\left( {{\gamma _{th}}} \right)}^{\frac{2}{\alpha }}}t - {{{\delta ^2}{\gamma _{th}}{t^{{\alpha  \mathord{\left/
 {\vphantom {\alpha  2}} \right.
 \kern-\nulldelimiterspace} 2}}}} \mathord{\left/
 {\vphantom {{{\delta ^2}{\gamma _{th}}{t^{{\alpha  \mathord{\left/
 {\vphantom {\alpha  2}} \right.
 \kern-\nulldelimiterspace} 2}}}} {{P_s}}}} \right.
 \kern-\nulldelimiterspace} {{P_s}}}} \Big\}dt.
\end{align}

\end{lemma}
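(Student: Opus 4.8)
The plan is to use the fact that $\gamma_{s,e}$ is the maximum of the per-eavesdropper SINR over the HPPP $\Phi_{s,e}$, so that its CDF is a \emph{void probability}: $F_{\gamma_{s,e}}(\gamma_{th})=\Pr\{\text{no eavesdropper attains SINR}\ge\gamma_{th}\}$. First I would mark each point $e_k\in\Phi_{s,e}$ by the indicator of $\{\mathrm{SINR}_{e_k}\ge\gamma_{th}\}$ and model the retained eavesdroppers as a thinned PPP with retention probability $p(r)=\Pr\{\mathrm{SINR}\text{ at distance }r\ge\gamma_{th}\}$. The void probability of a PPP of intensity $\lambda_e^s$ then gives
\begin{align}
F_{\gamma_{s,e}}(\gamma_{th}) = \exp\Big\{-\lambda_e^s\int_{\mathbb{R}^2} p(|x|)\,dx\Big\} = \exp\Big\{-2\pi\lambda_e^s\int_0^\infty p(r)\,r\,dr\Big\}, \nonumber
\end{align}
and the substitution $t=r^2$ will turn the $2\pi r\,dr$ area measure into $\pi\,dt$, matching the $\pi\lambda_e^s\int_0^\infty(\cdot)\,dt$ prefactor of \eqref{CDFAP_E}.

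The next step is to compute $p(r)$. Placing an eavesdropper at distance $r$ from the typical sensor and writing $g_0=|h_{s_0,e_k}|^2\sim\mathrm{Exp}(1)$ for its single-antenna desired-link gain, I have
\begin{align}
p(r) = \Pr\Big\{g_0 > \gamma_{th}\,r^{\alpha}\big(In_{s,e}+\delta^2/P_s\big)\Big\} = e^{-\gamma_{th} r^{\alpha}\delta^2/P_s}\,\mathcal{L}_{In_{s,e}}\!\left(\gamma_{th}r^{\alpha}\right), \nonumber
\end{align}
since the exponential CDF converts the outage event into the Laplace transform of the aggregate interference $In_{s,e}=I_{s,e}+I_{ap,e}$ evaluated at $s=\gamma_{th}r^{\alpha}$, times the noise factor. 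As $\Phi_{s,a}$ and $\Phi_{ap,a}$ are independent, $\mathcal{L}_{In_{s,e}}=\mathcal{L}_{I_{s,e}}\mathcal{L}_{I_{ap,e}}$. The per-interferer gain in $I_{s,e}$ is $\mathrm{Exp}(1)$; in $I_{ap,e}$ it is $\mu|\mathbf{h}_{j,e_k}\mathbf{w}_j|^2$, and because the MRT vector $\mathbf{w}_j=\mathbf{h}_{j,sk_j}^\dag/\|\mathbf{h}_{j,sk_j}\|$ is a unit vector independent of the i.i.d.\ Gaussian channel $\mathbf{h}_{j,e_k}$, the projection $\mathbf{h}_{j,e_k}\mathbf{w}_j$ is $\mathcal{CN}(0,1)$, so its squared magnitude is again $\mathrm{Exp}(1)$. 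Applying the standard HPPP interference transform $\mathcal{L}_I(s)=\exp\{-\pi\lambda\,\E[g^{2/\alpha}]\Gamma(1-2/\alpha)s^{2/\alpha}\}$ with $\E[g^{2/\alpha}]=\Gamma(1+2/\alpha)$ for $I_{s,e}$ and $\E[(\mu g)^{2/\alpha}]=\mu^{2/\alpha}\Gamma(1+2/\alpha)$ for $I_{ap,e}$, and using $s^{2/\alpha}=(\gamma_{th}r^{\alpha})^{2/\alpha}=\gamma_{th}^{2/\alpha}r^2$, collapses both factors into
\begin{align}
\mathcal{L}_{In_{s,e}}(\gamma_{th}r^{\alpha}) = \exp\Big\{-(\lambda_s\rho_s+\lambda_{ap}\rho_{ap}\mu^{2/\alpha})\pi\Gamma(1+2/\alpha)\Gamma(1-2/\alpha)\gamma_{th}^{2/\alpha}r^2\Big\}. \nonumber
\end{align}

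To finish, I would substitute $p(r)=e^{-\gamma_{th}r^{\alpha}\delta^2/P_s}\mathcal{L}_{In_{s,e}}(\gamma_{th}r^{\alpha})$ into the void probability and change variables to $t=r^2$ (so $r^{\alpha}=t^{\alpha/2}$, $r^2=t$), which reproduces \eqref{CDFAP_E} term by term. I expect the main obstacle to be conceptual rather than algebraic: the void-probability step treats the success events of distinct eavesdroppers as independent, whereas the interference field $In_{s,e}$ is in fact common to all of them and hence induces correlation. I would justify the clean product form by the usual stochastic-geometry simplification, keeping only the independence supplied by the distinct i.i.d.\ fading of each eavesdropper link and of the per-interferer channels and pre-averaging the interference in $p(r)$; this is the same device that underlies the non-colluding, dominant-eavesdropper model cited for \eqref{SINR_Sensor_Eve}. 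The only routine point left is convergence of the $r$-integral, which requires $\alpha>2$ (so that $\Gamma(1-2/\alpha)$ exists) and is otherwise secured at large $r$ by the noise factor $e^{-\gamma_{th}r^{\alpha}\delta^2/P_s}$.
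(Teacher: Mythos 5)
Your proposal matches the paper's own proof in Appendix B essentially step for step: the CDF of the maximum is written as a void probability via the generating functional of the HPPP, the per-eavesdropper success probability is reduced to a noise factor times $\mathcal{L}_{In_{s,e}}(\gamma_{th}r^\alpha)$ using the $\mathrm{Exp}(1)$ gains (including the MRT-projection argument for $I_{ap,e}$), and the polar-coordinate/$t=r^2$ substitution yields \eqref{CDFAP_E}. You are in fact more explicit than the paper in flagging the common-interference correlation across eavesdroppers that the product form silently averages away, which is the same tacit simplification the paper makes.
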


\begin{proof}
See Appendix B.
\end{proof}

\subsubsection{Average Secrecy Rate}
Based on our fundamental work in~\cite{lifeng2014physical}, the average secrecy rate between the sensor and the access point is the average of secrecy rate ${{ C}_{s}^{ap}}$ over $\gamma_{s,e}$  and $\gamma_{ap}$, which can be written as
\begin{align}\label{first_Hop_ASR}
{{\bar C}_{s}^{ap}} &= \frac{1}{{\ln 2}}\int_0^\infty  {\frac{{{F_{{\gamma _{s,e}}}}\left( x \right)}}{{1 + x}}(1 - {F_{{\gamma _{ap}}}}\left( x \right))} dx.
\end{align}
By substituting the CDF of ${\gamma _{ap}}$ in \eqref{CDFAP} and the CDF of  ${\gamma _{s,e}}$ in \eqref{CDFAP_E} into \eqref{first_Hop_ASR},
 we can obtain the average secrecy rate between the sensor and the access point.

Note that the derived average secrecy rate between the sensor and the access point in \eqref{first_Hop_ASR} is not in a simple form. As such, in the following corollary,  we present the interference-limited case for the  average secrecy rate with a single antenna at the access point.
\begin{cor}
When the access points are equipped with single antenna in the interference-limited scenario, the  average secrecy rate between the sensor and the access point
is given by
\begin{align}\label{first_Hop_ASR_special}
&{{\bar C}_{s}^{ap}} = \nonumber \\& \frac{{\pi {\lambda _{ap}}\left( {1 - {\rho _{ap}}} \right)}}{{\ln 2}}\int_0^\infty  {\frac{{\exp \left\{ { - {{\pi \lambda _e^s} \mathord{\left/
 {\vphantom {{\pi \lambda _e^s} {\left( {{\Lambda _1}{x^{{2 \mathord{\left/
 {\vphantom {2 \alpha }} \right.
 \kern-\nulldelimiterspace} \alpha }}}} \right)}}} \right.
 \kern-\nulldelimiterspace} {\left( {{\Lambda _1}{x^{{2 \mathord{\left/
 {\vphantom {2 \alpha }} \right.
 \kern-\nulldelimiterspace} \alpha }}}} \right)}}} \right\}}}{{\left( {1 + x} \right)\left( {{\Lambda _1}{x^{{2 \mathord{\left/
 {\vphantom {2 \alpha }} \right.
 \kern-\nulldelimiterspace} \alpha }}} + \pi {\lambda _{ap}}\left( {1 - {\rho _{ap}}} \right)} \right)}}dx} ,
\end{align}
where ${\Lambda _1} = \left( {{\lambda _s}{\rho _s} + {\lambda _{ap}}{\rho _{ap}}{\mu ^{\frac{2}{\alpha }}}} \right)\pi \Gamma \left( {1 + {2 \mathord{\left/
 {\vphantom {2 \alpha }} \right.
 \kern-\nulldelimiterspace} \alpha }} \right)\Gamma \left( {1 - {2 \mathord{\left/
 {\vphantom {2 \alpha }} \right.
 \kern-\nulldelimiterspace} \alpha }} \right).$
 \end{cor}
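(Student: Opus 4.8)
The plan is to obtain \eqref{first_Hop_ASR_special} by specializing the general average-secrecy-rate integral \eqref{first_Hop_ASR} to the two hypotheses of the corollary---a single antenna, $M=1$, and the interference-limited regime, $\delta^2\to 0$---and then evaluating the two elementary integrals that remain. Concretely, I would plug the simplified versions of the CDFs from Lemma~\ref{Lemma1} and Lemma~\ref{Lemma2} into \eqref{first_Hop_ASR}.

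First I would reduce the CDF of $\gamma_{ap}$ in \eqref{CDFAP}. Setting $M=1$ turns the summation $\sum_{m=1}^{M-1}$ into an empty sum, so the entire second term of \eqref{CDFAP} disappears and only the leading integral survives. Imposing $\delta^2=0$ then deletes the noise contribution $\gamma_{th}r^{\alpha}\delta^2/P_s$ from the exponent. Writing $x$ for $\gamma_{th}$ and using the constant $\Lambda_1$ defined in the corollary, the complementary CDF becomes a Gaussian-type integral,
\[
1-F_{\gamma_{ap}}(x)=2\pi\lambda_{ap}(1-\rho_{ap})\int_0^\infty r\,\exp\!\big\{-\big(\Lambda_1 x^{2/\alpha}+\pi\lambda_{ap}(1-\rho_{ap})\big)r^2\big\}\,dr,
\]
which I would evaluate with $\int_0^\infty r\,e^{-ar^2}\,dr=1/(2a)$ to get
\[
1-F_{\gamma_{ap}}(x)=\frac{\pi\lambda_{ap}(1-\rho_{ap})}{\Lambda_1 x^{2/\alpha}+\pi\lambda_{ap}(1-\rho_{ap})}.
\]

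Next I would treat the eavesdropper CDF \eqref{CDFAP_E}. Setting $\delta^2=0$ removes the term $\delta^2\gamma_{th}t^{\alpha/2}/P_s$ from the inner exponent, leaving $\int_0^\infty\exp\{-\Lambda_1 x^{2/\alpha}t\}\,dt=1/(\Lambda_1 x^{2/\alpha})$, so that
\[
F_{\gamma_{s,e}}(x)=\exp\!\Big\{-\frac{\pi\lambda_e^s}{\Lambda_1 x^{2/\alpha}}\Big\}.
\]
Substituting both closed forms into \eqref{first_Hop_ASR} and pulling the factor $\pi\lambda_{ap}(1-\rho_{ap})$ outside the integral reproduces \eqref{first_Hop_ASR_special} exactly.

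I do not anticipate a genuine obstacle here, since every step is mechanical once the two hypotheses are applied; the only points needing a word of care are the empty-sum convention that kills the second term of \eqref{CDFAP} at $M=1$, and convergence. The latter is automatic: the path-loss condition $\alpha>2$ (so that $0<2/\alpha<1$) keeps $\Gamma(1\pm 2/\alpha)$ finite and $\Lambda_1>0$; near $x=0$ the factor $F_{\gamma_{s,e}}(x)$ decays like $\exp(-c\,x^{-2/\alpha})\to 0$, and as $x\to\infty$ the integrand behaves like $x^{-1-2/\alpha}$, so the final integral in \eqref{first_Hop_ASR_special} converges. No further approximation is introduced, so the expression is exact within the single-antenna interference-limited model.
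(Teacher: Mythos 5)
Your proposal is correct and follows exactly the route the paper intends: the corollary is just the general expression \eqref{first_Hop_ASR} with the Lemma~1 and Lemma~2 CDFs specialized to $M=1$ (empty sum) and $\delta^2=0$, after which the two inner integrals evaluate in closed form precisely as you compute. Your added remarks on the empty-sum convention and on convergence are sound but not needed beyond what the paper leaves implicit.
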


\subsection{Average Secrecy Rate between Access Point and Sink}

Similar to \eqref{secrecy_rate_Eq_s}, for   a typical access point and its associated sink, the instantaneous secrecy rate is defined as
\begin{align}\label{ASR_AccessPoint_Sink}
{C_s^{sk}} =[ { {{C_{sk}} - {C_{ap,e}}}  }]^+,
\end{align}
where $C_{sk}=\log_2\left(1+{\gamma _{sk}}\right)$ and $C_{ap,e}=\log_2\left(1+{\gamma _{ap.e}}\right)$.
 \subsubsection{New Statistics}
We derive the CDFs of SINRs at the typical sink and the most detrimental eavesdropper that intercepts the  transmission between the typical access point and the sink   in 
{\bf{Lemma \ref{Lemma3}}}   and {\bf{Lemma \ref{Lemma4}}}, respectively.

\begin{lemma}\label{Lemma3}
The CDF of SINR at the typical sink is derived as
\begin{align}\label{CDFSK_Sink_2hop}
&{F_{{\gamma _{sk}}}}\left( x \right) =1 - 2\pi {\lambda _{sk}}\int_0^\infty  {r\exp \Big\{ { - {\lambda _{ap}}{\rho _{ap}}\pi \Gamma \left( {1 + {2 \mathord{\left/
 {\vphantom {2 \beta }} \right.
 \kern-\nulldelimiterspace} \beta }} \right)}} \nonumber\\
& {\Gamma \left( {1 - {2 \mathord{\left/
 {\vphantom {2 \beta }} \right.
 \kern-\nulldelimiterspace} \beta }} \right){{\left( {{\gamma _{th}}} \right)}^{\frac{2}{\beta }}}{r^2} - {{{\gamma _{th}}{r^\beta }{\delta ^2}} \mathord{\left/
 {\vphantom {{{\gamma _{th}}{r^\beta }{\delta ^2}} {{P_{ap}}}}} \right.
 \kern-\nulldelimiterspace} {{P_{ap}}}} - \pi {\lambda _{sk}}{r^2}} \Big\}dr  - 2\pi {\lambda _{sk}}\nonumber\\
&\sum\limits_{m = 1}^{M - 1} {\frac{1}{{{{\left( { - 1} \right)}^m}}}\sum {\frac{1}{{\prod\limits_{l = 1}^m {{m_l}!l{!^{{m_l}}}} }}} } \int_0^\infty  {{r^{\beta m + 1}}\exp \Big\{ { - {\lambda _{ap}}{\rho _{ap}}\pi } } \nonumber\\
& {\Gamma \left( {1 + {2 \mathord{\left/
 {\vphantom {2 \beta }} \right.
 \kern-\nulldelimiterspace} \beta }} \right)\Gamma \left( {1 - {2 \mathord{\left/
 {\vphantom {2 \beta }} \right.
 \kern-\nulldelimiterspace} \beta }} \right){{\left( {{\gamma _{th}}} \right)}^{\frac{2}{\beta }}}{r^2} - {{{\gamma _{th}}{r^\beta }{\delta ^2}} \mathord{\left/
 {\vphantom {{{\gamma _{th}}{r^\beta }{\delta ^2}} {{P_{ap}}}}} \right.
 \kern-\nulldelimiterspace} {{P_{ap}}}} - \pi {\lambda _{sk}}{r^2}} \Big\}\nonumber\\
&\Big[  - {\lambda _{ap}}{\rho _{ap}}\pi \frac{2}{\beta }\Gamma \left( {1 + {2 \mathord{\left/
 {\vphantom {2 \beta }} \right.
 \kern-\nulldelimiterspace} \beta }} \right)\Gamma \left( {1 - {2 \mathord{\left/
 {\vphantom {2 \beta }} \right.
 \kern-\nulldelimiterspace} \beta }} \right){{\left( {{\gamma _{th}}} \right)}^{\frac{2}{\beta }}}{r^{2 - \beta }} - {\gamma _{th}} \nonumber\\
& {{{\delta ^2}} \mathord{\left/
 {\vphantom {{{\gamma _{th}}{\delta ^2}} {{P_{ap}}}}} \right.
 \kern-\nulldelimiterspace} {{P_{ap}}}} \Big]^{{m_1}} \prod\limits_{l = 2}^m \Big[  - {\lambda _{ap}}{\rho _{ap}}\pi \Gamma \left( {1 + {2 \mathord{\left/
 {\vphantom {2 \beta }} \right.
 \kern-\nulldelimiterspace} \beta }} \right)\Gamma \left( {1 - {2 \mathord{\left/
 {\vphantom {2 \beta }} \right.
 \kern-\nulldelimiterspace} \beta }} \right){{\left( {{\gamma _{th}}} \right)}^{\frac{2}{\beta }}}\nonumber\\
 &\prod\limits_{j = 0}^{l - 1} {\left( {{2 \mathord{\left/
 {\vphantom {2 \beta }} \right.
 \kern-\nulldelimiterspace} \beta } - j} \right){r^{2 - l\beta }}}  \Big]^{{m_l}} dr.
 \end{align}

\end{lemma}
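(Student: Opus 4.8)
The plan is to obtain the complementary CDF $1-F_{\gamma_{sk}}(\gamma_{th})$ by conditioning on the access point--sink distance, using the Gamma statistics of the MRT array gain, and then averaging over the aggregate interference through its Laplace transform. First I would condition on $r=\left|X_{ap_0,sk_0}\right|$. Since the sinks form the PPP $\Phi_{sk}$ of intensity $\lambda_{sk}$ and the typical access point associates with its \emph{nearest} sink, the contact distance has density $f_r(r)=2\pi\lambda_{sk}r\,e^{-\pi\lambda_{sk}r^2}$, which follows from the void probability of $\Phi_{sk}$. This is the source of the $2\pi\lambda_{sk}r$ prefactor and of the $-\pi\lambda_{sk}r^2$ term inside the exponent of \eqref{CDFSK_Sink_2hop}.

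Next I would fix the fading statistics in \eqref{SINR_sink}. The desired array gain $\left\|\B{g}_{ap_0,sk_0}\right\|^2$ is $\mathrm{Gamma}(M,1)$ because the MRT beamformer is matched to the $M$-dimensional channel, so its complementary CDF is $e^{-g}\sum_{m=0}^{M-1}g^m/m!$. Each interfering term $\left|\B{g}_{j,sk_0}\B{h}_{j,sk_j}^\dag/\|\B{h}_{j,sk_j}\|\right|^2$ is unit-mean exponential, being the squared projection of an independent complex Gaussian vector onto a unit vector drawn independently of it. Writing $c=\delta^2/P_{ap}$ and $I=In_{ap,sk}$, and conditioning on $r$, the complementary CDF becomes $\E_I\!\left[e^{-\gamma_{th}r^\beta(I+c)}\sum_{m=0}^{M-1}\tfrac{(\gamma_{th}r^\beta(I+c))^m}{m!}\right]$.

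The core step is to evaluate this expectation. Setting $s=\gamma_{th}r^\beta$ and using the probability generating functional of the interfering PPP $\Phi_{ap,a}$ (intensity $\lambda_{ap}\rho_{ap}$) with the exponential marks above, I would obtain $\psi(s)\defequal\E_I[e^{-s(I+c)}]=\exp\!\big(-sc-\kappa s^{2/\beta}\big)$ with $\kappa=\pi\lambda_{ap}\rho_{ap}\Gamma(1+2/\beta)\Gamma(1-2/\beta)$; here the $s^{2/\beta}$ scaling comes from the polar-coordinate integral together with the reflection identity $\Gamma(1+x)\Gamma(1-x)=\pi x/\sin(\pi x)$. The $m=0$ term is then exactly $\psi(s)$, which after weighting by $f_r(r)$ and integrating over $r$ gives the first integral of \eqref{CDFSK_Sink_2hop}. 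For $m\ge1$, I would convert the interference moments into derivatives via $\E_I[(s(I+c))^m e^{-s(I+c)}]=s^m(-1)^m\psi^{(m)}(s)$, expand $\psi^{(m)}(s)=\psi(s)\,m!\sum \tfrac{1}{\prod_l m_l!(l!)^{m_l}}\prod_l\big(g^{(l)}(s)\big)^{m_l}$ by Fa\`a di Bruno's formula over partitions with $\sum_l l\,m_l=m$ (where $g(s)=-sc-\kappa s^{2/\beta}$), substitute $s=\gamma_{th}r^\beta$, and finally distribute the prefactor $\gamma_{th}^m=\prod_l(\gamma_{th}^l)^{m_l}$ so that one factor $\gamma_{th}^l$ enters each $l$-th derivative bracket. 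This reproduces exactly the bracketed terms $[\,\cdots]^{m_1}$ and $\prod_{l=2}^m[\,\cdots]^{m_l}$, while the surviving $r^{\beta m}$ from $s^m$ combines with the density's $r$ to give the $r^{\beta m+1}$ factor of \eqref{CDFSK_Sink_2hop}.

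I expect the principal obstacle to be bookkeeping rather than any deep probabilistic step. Raising the aggregate interference to the $m$-th power forces the $m$-th derivative of the composite exponential $\psi=e^{g}$, and hence the Bell-polynomial/partition sum; the delicate part is tracking how the prefactors $\gamma_{th}^m$ and $r^{\beta m}$ redistribute into the individual derivative brackets by means of the constraint $\sum_l l\,m_l=m$, while keeping the derivative orders $g'(s)=-c-\kappa(2/\beta)s^{2/\beta-1}$ and $g^{(l)}(s)=-\kappa\prod_{j=0}^{l-1}(2/\beta-j)s^{2/\beta-l}$ for $l\ge2$ consistent after the substitution. The contact-distance density and the interference Laplace transform are otherwise standard and mirror the derivation of Lemma~\ref{Lemma1}.
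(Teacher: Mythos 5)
Your proposal is correct and follows essentially the same route as the paper's Appendix C: conditioning on the nearest-sink contact distance, using the $\mathrm{Gamma}(M,1)$ array gain and unit-exponential interference marks, writing the higher-order terms as $m$-th derivatives of the interference Laplace transform, and expanding via Fa\`a di Bruno over partitions with $\sum_l l\,m_l=m$. The only cosmetic difference is that you differentiate with respect to $s=\gamma_{th}r^\beta$ rather than $x=r^\beta$ as the paper does, and your redistribution of the $\gamma_{th}^m$ and $r^{\beta m}$ factors correctly reconciles the two.
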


\begin{proof}
See Appendix C.
\end{proof}

\begin{lemma}\label{Lemma4}
The CDF of SINR at the most detrimental eavesdropper which intercepts the  transmission between the typical access point and the sensor is derived as
\begin{align}\label{CDFSK_E}
{F_{{\gamma _{ap,e}}}}\left( x \right) &= \exp \Big\{ { - \pi \lambda _e^{ap}\int_0^\infty  {\exp \big\{ { - {\lambda _{ap}}{\rho _{ap}}\pi \Gamma \left( {1 + {2 \mathord{\left/
 {\vphantom {2 \beta }} \right.
 \kern-\nulldelimiterspace} \beta }} \right)} } }
\nonumber \\ &  { {\Gamma \left( {1 - {2 \mathord{\left/
 {\vphantom {2 \beta }} \right.
 \kern-\nulldelimiterspace} \beta }} \right){\gamma _{th}}^{\frac{2}{\beta }}t - {{{\sigma ^2}{\gamma _{th}}{t^{{\beta  \mathord{\left/
 {\vphantom {\beta  2}} \right.
 \kern-\nulldelimiterspace} 2}}}} \mathord{\left/
 {\vphantom {{{\sigma ^2}{\gamma _{th}}{t^{{\beta  \mathord{\left/
 {\vphantom {\beta  2}} \right.
 \kern-\nulldelimiterspace} 2}}}} {{P_{ap}}}}} \right.
 \kern-\nulldelimiterspace} {{P_{ap}}}}} \big\}dt} \Big\}.
\end{align}

\end{lemma}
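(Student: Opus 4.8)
The plan is to compute the CDF of the maximum eavesdropper SINR over the process $\Phi_{ap,e}$ by the same route used for Lemma~\ref{Lemma2}, now specialized to a single interferer species (access points only) and path-loss exponent $\beta$. Writing $F_{\gamma_{ap,e}}(x)=\Pr(\gamma_{ap,e}\le x)$ as the probability that no eavesdropper achieves SINR above $x$, and using the non-colluding convention to treat the individual eavesdropper SINRs as conditionally independent given the interferer field, I would invoke the probability generating functional (PGFL) of $\Phi_{ap,e}$ (intensity $\lambda_e^{ap}$) to obtain
\begin{align}
F_{\gamma_{ap,e}}(x)=\exp\Big\{-\lambda_e^{ap}\int_{\mathbb{R}^2}\Pr\big(\mathrm{SINR}(y)>x\big)\,dy\Big\},\nonumber
\end{align}
where $\mathrm{SINR}(y)$ denotes the SINR of a lone eavesdropper at location $y$ with $r=|y|$.

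The next step is to evaluate the per-eavesdropper tail probability. First I would establish that the effective desired gain $\big|\mathbf{g}_{ap_0,e_k}\frac{\mathbf{g}_{ap_0,sk_0}^\dag}{\|\mathbf{g}_{ap_0,sk_0}\|}\big|^2$ is unit-mean exponential: the MRT beamformer $\frac{\mathbf{g}_{ap_0,sk_0}^\dag}{\|\mathbf{g}_{ap_0,sk_0}\|}$ is a unit vector built from the legitimate channel and is independent of the eavesdropper channel $\mathbf{g}_{ap_0,e_k}$, so the projection is a unit-variance complex Gaussian scalar and its squared magnitude is $\mathrm{Exp}(1)$; the identical argument makes each interference gain $\big|\mathbf{g}_{j,e_k}\frac{\mathbf{h}_{j,sk_k}^\dag}{\|\mathbf{h}_{j,sk_k}\|}\big|^2$ an independent $\mathrm{Exp}(1)$ variable. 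Applying the exponential CCDF to the desired gain and then averaging over the interference gives
\begin{align}
\Pr(\mathrm{SINR}(y)>x)=\exp\!\Big(-x r^\beta \tfrac{\sigma^2}{P_{ap}}\Big)\,\mathcal{L}_{In_{ap,e}}\!\big(x r^\beta\big),\nonumber
\end{align}
so the interference enters only through its Laplace transform at $s=x r^\beta$.

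The third step evaluates $\mathcal{L}_{In_{ap,e}}(s)$ via the PGFL of the interfering access-point process $\Phi_{ap,a}$ (intensity $\lambda_{ap}\rho_{ap}$). Averaging each $\mathrm{Exp}(1)$ gain first and passing to polar coordinates yields the standard closed form $\mathcal{L}_{In_{ap,e}}(s)=\exp\{-\lambda_{ap}\rho_{ap}\pi\Gamma(1+2/\beta)\Gamma(1-2/\beta)s^{2/\beta}\}$. Substituting $s=x r^\beta$ gives $s^{2/\beta}=x^{2/\beta}r^2$; inserting this together with the noise factor into the PGFL integral, converting to polar coordinates, and making the change of variable $t=r^2$ (so $r^\beta=t^{\beta/2}$ and $2\pi r\,dr=\pi\,dt$) reproduces exactly \eqref{CDFSK_E} with $x=\gamma_{th}$.

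The main obstacle I anticipate is not the algebra but justifying the conditional-independence step: the aggregate interference $In_{ap,e}$ seen by different eavesdroppers is generated by the \emph{same} access-point process, so the per-eavesdropper SINRs are correlated through the shared interferer locations and are not truly independent. The clean product form above amounts to moving the expectation over $\Phi_{ap,a}$ inside the PGFL exponential, i.e.\ treating the interference field as independently regenerated for each eavesdropper. I would flag this as the single place where an approximation enters, as is standard in this literature, whereas the exponentiality of the beamformed gains and the interference Laplace transform are exact.
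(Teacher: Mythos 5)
Your proposal matches the paper's own proof in Appendix D essentially step for step: the probability generating functional of $\Phi_{ap,e}$ applied to the per-eavesdropper success probability, the unit-mean exponential beamformed gains, the Laplace transform of $In_{ap,e}$ obtained from the generating functional of $\Phi_{ap,a}$, and the conversion to polar coordinates with the substitution $t=r^2$. Your closing caveat about treating the shared interference field as independently regenerated across eavesdroppers is well taken --- the paper makes exactly this move implicitly in step $(a)$ of \eqref{CDFse_D1} without comment.
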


\begin{proof}
See Appendix D.
\end{proof}
\subsubsection{Average Secrecy Rate}
The average secrecy rate   between the access point and the sink is the average of the secrecy rate $C_s^{sk}$ over $\gamma_{sk}$ and $\gamma_{ap,e}$, which is given by
\begin{align}\label{second_Hop_ASR}
{{\bar C}_s^{sk}} &= \frac{1}{{\ln 2}}\int_0^\infty  {\frac{{{F_{{\gamma _{sk}}}}\left( x \right)}}{{1 + x}}(1 - {F_{{\gamma _{ap,e}}}}\left( x \right))} dx.
\end{align}
By substituting the CDF of ${\gamma _{sk}}$ in \eqref{CDFSK_Sink_2hop} and the CDF of  ${\gamma _{ap,e}}$ in \eqref{CDFSK_E} into \eqref{second_Hop_ASR},  we can obtain the average secrecy rate between the access point and the sink.

Note that the derived   average secrecy rate between  the access point and the sink is also not in a simple form, we present the interference-limited case for the  average secrecy rate with single antenna at the access point in the following corollary.
\begin{cor}
When the access points are equipped with single antenna in the interference-limited scenario, the  average secrecy rate between the access point and the sink
is given by
\begin{align}\label{second_Hop_ASR_special}
&\bar C_s^{sk} = \frac{{\pi {\lambda _{sk}}}}{{\ln 2}}\int_0^\infty  {\frac{{\exp \left\{ { - {{\pi \lambda _e^{ap}} \mathord{\left/
 {\vphantom {{\pi \lambda _e^{ap}} {{\Lambda _2}{x^{{2 \mathord{\left/
 {\vphantom {2 \beta }} \right.
 \kern-\nulldelimiterspace} \beta }}}}}} \right.
 \kern-\nulldelimiterspace} {{\Lambda _2}{x^{{2 \mathord{\left/
 {\vphantom {2 \beta }} \right.
 \kern-\nulldelimiterspace} \beta }}}}}} \right\}}}{{\left( {1 + x} \right)\left( {{\Lambda _2}{x^{{2 \mathord{\left/
 {\vphantom {2 \beta }} \right.
 \kern-\nulldelimiterspace} \beta }}} + \pi {\lambda _{sk}}} \right)}}dx}  ,
\end{align}
where ${\Lambda _2} = {\lambda _{ap}}{\rho _{ap}}\pi \Gamma \left( {1 + {2 \mathord{\left/
 {\vphantom {2 \beta }} \right.
 \kern-\nulldelimiterspace} \beta }} \right)\Gamma \left( {1 - {2 \mathord{\left/
 {\vphantom {2 \beta }} \right.
 \kern-\nulldelimiterspace} \beta }} \right).$ Based on \eqref{second_Hop_ASR_special}, for a specific target average secrecy rate $\bar C_0$ between the access point and the sink, the number
of sinks must satisfy
\begin{align}\label{lambda_sk}
\lambda _{sk}>{{\bar C}_{\rm{0}}}{\Lambda _{\rm{2}}}\frac{{\ln 2}}{{\pi \varepsilon }},
\end{align}
where $\varepsilon  = \int_{\rm{0}}^\infty  {\frac{{\exp \left\{ { - \pi \lambda _e^{ap}/\left( {{\Lambda _2}{x^{2/\beta }}}
 \right)} \right\}}}{{\left( {1 + x} \right){x^{2/\beta }}}}dx}$.
 \end{cor}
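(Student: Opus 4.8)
The plan is to specialize the two cumulative distribution functions of \textbf{Lemma \ref{Lemma3}} and \textbf{Lemma \ref{Lemma4}} to the single-antenna, interference-limited regime, insert them into the average-secrecy-rate integral \eqref{second_Hop_ASR}, evaluate the resulting elementary integrals in closed form to obtain \eqref{second_Hop_ASR_special}, and then read off the sink-density condition \eqref{lambda_sk} by a one-line monotonicity bound on the integrand.

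First I would set $M=1$ in \eqref{CDFSK_Sink_2hop}. The outer sum $\sum_{m=1}^{M-1}$ is then empty, so every correction term drops and only the leading integral survives; discarding the noise contribution $\gamma_{th}r^\beta\delta^2/P_{ap}$ (the interference-limited regime) leaves the Gaussian-type integral $\int_0^\infty r\exp\{-(\Lambda_2 x^{2/\beta}+\pi\lambda_{sk})r^2\}\,dr = [2(\Lambda_2 x^{2/\beta}+\pi\lambda_{sk})]^{-1}$, where $x=\gamma_{th}$ and $\Lambda_2=\lambda_{ap}\rho_{ap}\pi\Gamma(1+2/\beta)\Gamma(1-2/\beta)$. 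This collapses the sink CDF to $F_{\gamma_{sk}}(x)=\Lambda_2 x^{2/\beta}/(\Lambda_2 x^{2/\beta}+\pi\lambda_{sk})$, so that $1-F_{\gamma_{sk}}(x)=\pi\lambda_{sk}/(\Lambda_2 x^{2/\beta}+\pi\lambda_{sk})$. In the same regime the inner integral of \eqref{CDFSK_E} becomes $\int_0^\infty \exp\{-\Lambda_2 x^{2/\beta}t\}\,dt=(\Lambda_2 x^{2/\beta})^{-1}$, giving the eavesdropper CDF $F_{\gamma_{ap,e}}(x)=\exp\{-\pi\lambda_e^{ap}/(\Lambda_2 x^{2/\beta})\}$.

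Substituting these into \eqref{second_Hop_ASR} and forming, exactly as in the first-hop expression \eqref{first_Hop_ASR}, the product of the eavesdropper CDF $F_{\gamma_{ap,e}}(x)$ with the complementary term $1-F_{\gamma_{sk}}(x)$ over $1+x$, the constant $\pi\lambda_{sk}$ pulls out of the integral and the integrand is precisely that of \eqref{second_Hop_ASR_special}; this establishes the closed form. For the design condition I would bound the denominator using $\Lambda_2 x^{2/\beta}+\pi\lambda_{sk}>\Lambda_2 x^{2/\beta}$, valid for every $\lambda_{sk}>0$ and $x>0$, which upper-bounds the integrand and yields
\[
\bar{C}_s^{sk} < \frac{\pi\lambda_{sk}}{\Lambda_2\ln 2}\int_0^\infty \frac{\exp\{-\pi\lambda_e^{ap}/(\Lambda_2 x^{2/\beta})\}}{(1+x)x^{2/\beta}}\,dx = \frac{\pi\lambda_{sk}\,\varepsilon}{\Lambda_2\ln 2},
\]
with $\varepsilon$ exactly the integral in the statement. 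Imposing the target $\bar{C}_s^{sk}\ge\bar{C}_0$ then forces $\bar{C}_0<\pi\lambda_{sk}\varepsilon/(\Lambda_2\ln 2)$, i.e. $\lambda_{sk}>\bar{C}_0\Lambda_2\ln 2/(\pi\varepsilon)$, which is \eqref{lambda_sk}.

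The main obstacle is conceptual rather than computational: the condition \eqref{lambda_sk} is \emph{necessary}, being extracted from an upper bound on $\bar{C}_s^{sk}$, so I must verify that $\varepsilon$ is finite, strictly positive, and independent of $\lambda_{sk}$. Its integrand vanishes like $\exp\{-c/x^{2/\beta}\}$ as $x\to0$ (overwhelming the $x^{-2/\beta}$ singularity) and decays like $x^{-1-2/\beta}$ as $x\to\infty$, so $\varepsilon$ is finite under the standing path-loss assumption $\beta>2$ (which is also what keeps $\Lambda_2$ finite). The other point demanding care is the interference-limited bookkeeping at $M=1$: I should confirm that the empty sum genuinely removes all correction terms in \eqref{CDFSK_Sink_2hop}, since that is where a stray factor would most easily slip in.
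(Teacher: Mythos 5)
Your proposal is correct and follows the route the paper intends: the corollary is stated without an explicit proof, and the only derivation available is exactly yours, namely setting $M=1$ in Lemma~\ref{Lemma3} and Lemma~\ref{Lemma4}, dropping the noise terms, evaluating the resulting Gaussian and exponential integrals to get $1-F_{\gamma_{sk}}(x)=\pi\lambda_{sk}/(\Lambda_2 x^{2/\beta}+\pi\lambda_{sk})$ and $F_{\gamma_{ap,e}}(x)=\exp\{-\pi\lambda_e^{ap}/(\Lambda_2 x^{2/\beta})\}$, and substituting into the average-secrecy-rate integral. You also correctly noticed that \eqref{second_Hop_ASR} as printed has the roles of $F_{\gamma_{sk}}$ and $F_{\gamma_{ap,e}}$ interchanged relative to the convention of \eqref{first_Hop_ASR} and of the target \eqref{second_Hop_ASR_special}, and your derivation of \eqref{lambda_sk} as a necessary condition via the bound $\Lambda_2 x^{2/\beta}+\pi\lambda_{sk}>\Lambda_2 x^{2/\beta}$, together with the finiteness check on $\varepsilon$ for $\beta>2$, is the right reading of that design rule.
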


\subsection{Overall Average Secrecy Rate}
In this subsection, we derive the overall average secrecy rate in three-tier WSNs. The instantaneous secrecy rate is defined as $C_s=\min\left({C_s^{ap}},{C_s^{sk}} \right)$. As such, the overall average secrecy rate is calculated as
\begin{align}\label{Overall_ASR}
{\bar C_s} = \int_0^\infty  x{f_{{C_s}}} \left( x \right)dx = \int_0^\infty  {\left( {1 - {F_{{C_s}}}\left( x \right)} \right)} dx,
\end{align}
where ${f_{{C_s}}} \left( x \right)$ and ${F_{{C_s}}}\left( x \right)$ is the probability density function (PDF) and the CDF of $C_s$, respectively. The CDF of $C_s$ is calculated as
\begin{align} \label{CDF_C_s_eq}
{F_{{C_s}}}\left( x \right)&=\Pr\left(\min\left({C_s^{ap}},{C_s^{sk}} \right)<x\right)\nonumber\\
&=1-\Pr\left(\min\left({C_s^{ap}},{C_s^{sk}} \right)>x\right)\nonumber\\
&=1-\Pr\left(C_s^{ap}>x\right)\Pr\left({C_s^{sk}}>x\right).
\end{align}
Substituting \eqref{CDF_C_s_eq} into \eqref{Overall_ASR}, we have
\begin{align}\label{Overall_ASR_12}
{\bar C_s} =  \int_0^\infty  { \Pr\left(C_s^{ap}>x\right)\Pr\left({C_s^{sk}}>x\right)} dx,
\end{align}
where
\begin{align}\label{CDF_C_s_ap_1}
\Pr\left(C_s^{ap}>x\right)=1-\int_0^\infty  {{f_{{\gamma _{s,e}}}}\left( t \right){F_{{\gamma _{ap}}}}\left( {{2^x}\left( {1 + t} \right) - 1} \right)} dt
\end{align}
and
\begin{align}\label{CDF_C_s_sk_1}
\Pr\left({C_s^{sk}}>x\right)=1-\int_0^\infty  {{f_{{\gamma _{ap,e}}}}\left( t \right){F_{{\gamma _{sk}}}}\left( {{2^x}\left( {1 + t} \right) - 1} \right)} dt.
\end{align}
 Here, ${f_{{\gamma _{s,e}}}}$ is the derivative of ${F_{{\gamma _{s,e}}}}$ given in \eqref{CDFAP_E}, and ${f_{{\gamma _{ap,e}}}}$ is the derivative of ${F_{{\gamma _{ap,e}}}}$ given in \eqref{CDFSK_E}.

\begin{figure}[t!]
    \begin{center}
        \includegraphics[width=2.8 in]{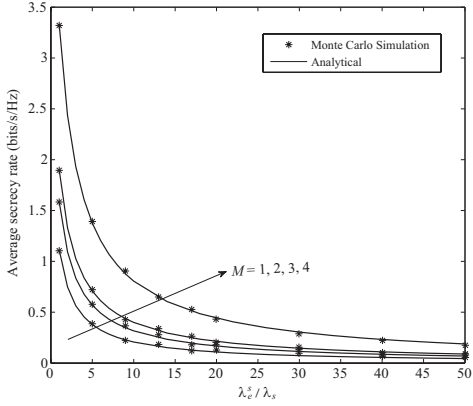}
        \caption{The average secrecy rate versus $\frac{{\lambda _e^s}}{{{\lambda _s}}}$.  $\lambda_{s}=10^{-2}$, $\rho_s=0.01$, $\lambda_{ap}=10^{-2}$, $\rho_{ap}=0.1$, $\alpha=3.5$, $P_{ap}=25$ dBm.}
        \label{Fig2}
    \end{center}
\end{figure}

Unfortunately, the derived overall average secrecy rate between the sensor and the sink is not in a simple form, which motivates us to consider the interference-limited case with single antenna at the access point, as presented in the following corollary.
\begin{cor}
When the access points are equipped with single antenna in the interference-limited scenario, the overall average secrecy rate between the sensor and the sink
is given by
\begin{align}\label{overal_ASR_special}
&\bar C_s =  \int_0^\infty  {\Bigg[ {\int_0^\infty  {\frac{{2\pi \lambda _e^s}}{{\alpha {\Lambda _1}{y^{{2 \mathord{\left/
 {\vphantom {2 \alpha }} \right.
 \kern-\nulldelimiterspace} \alpha } + 1}}}}\exp \big\{ { - {{\pi \lambda _e^s} \mathord{\left/
 {\vphantom {{\pi \lambda _e^s} {\left( {{\Lambda _1}{y^{{2 \mathord{\left/
 {\vphantom {2 \alpha }} \right.
 \kern-\nulldelimiterspace} \alpha }}}} \right)}}} \right.
 \kern-\nulldelimiterspace} {\big( {{\Lambda _1}{y^{{2 \mathord{\left/
 {\vphantom {2 \alpha }} \right.
 \kern-\nulldelimiterspace} \alpha }}}} \big)}}} \big\}} } \Bigg.}
 \nonumber \\
&\Bigg. {\frac{{\pi {\lambda _{ap}}\left( {1 - {\rho _{ap}}} \right)}}{{{\Lambda _1}{{\left( {{2^x}\left( {1 + y} \right) - 1} \right)}^{{2 \mathord{\left/
 {\vphantom {2 \alpha }} \right.
 \kern-\nulldelimiterspace} \alpha }}} + \pi {\lambda _{ap}}\left( {1 - {\rho _{ap}}} \right)}}dy} \Bigg]
  \nonumber \\
&\Bigg[ {\int_0^\infty  {\frac{{2{\pi ^2}\lambda _e^{ap}{\lambda _{sk}}\exp \left\{ { - {{\pi \lambda _e^{ap}} \mathord{\left/
 {\vphantom {{\pi \lambda _e^{ap}} {{\Lambda _2}{y^{{2 \mathord{\left/
 {\vphantom {2 \beta }} \right.
 \kern-\nulldelimiterspace} \beta }}}}}} \right.
 \kern-\nulldelimiterspace} {{\Lambda _2}{y^{{2 \mathord{\left/
 {\vphantom {2 \beta }} \right.
 \kern-\nulldelimiterspace} \beta }}}}}} \right\}}}{{\beta {\Lambda _2}{y^{{2 \mathord{\left/
 {\vphantom {2 \beta }} \right.
 \kern-\nulldelimiterspace} \beta } + 1}}\left( {{\Lambda _2}{{\left( {{2^x}\left( {1 + y} \right) - 1} \right)}^{{2 \mathord{\left/
 {\vphantom {2 \beta }} \right.
 \kern-\nulldelimiterspace} \beta }}} + \pi {\lambda _{sk}}} \right)}}dy} } \Bigg]dx ,
\end{align}
where ${\Lambda _1} = \left( {{\lambda _s}{\rho _s} + {\lambda _{ap}}{\rho _{ap}}{\mu ^{\frac{2}{\alpha }}}} \right)\pi \Gamma \left( {1 + {2 \mathord{\left/
 {\vphantom {2 \alpha }} \right.
 \kern-\nulldelimiterspace} \alpha }} \right)\Gamma \left( {1 - {2 \mathord{\left/
 {\vphantom {2 \alpha }} \right.
 \kern-\nulldelimiterspace} \alpha }} \right)$ and ${\Lambda _2} = {\lambda _{ap}}{\rho _{ap}}\pi \Gamma \left( {1 + {2 \mathord{\left/
 {\vphantom {2 \beta }} \right.
 \kern-\nulldelimiterspace} \beta }} \right)\Gamma \left( {1 - {2 \mathord{\left/
 {\vphantom {2 \beta }} \right.
 \kern-\nulldelimiterspace} \beta }} \right).$
 \end{cor}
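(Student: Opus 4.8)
The plan is to specialize the exact overall-rate formula \eqref{Overall_ASR_12}, together with \eqref{CDF_C_s_ap_1} and \eqref{CDF_C_s_sk_1}, to the single-antenna ($M=1$) interference-limited regime, in which both per-link CDFs collapse to elementary closed forms. First I would set $M=1$, so that the finite sums $\sum_{m=1}^{M-1}$ in \eqref{CDFAP} and \eqref{CDFSK_Sink_2hop} are empty and only the leading integral survives in each CDF; then I would drop the additive-noise terms (the interference-limited assumption, e.g.\ $\delta^2/P_s$ and $\sigma^2/P_{ap}$), turning each surviving integrand into a pure Gaussian in the radial variable $r$.

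Second I would evaluate the resulting elementary integrals. Applying $\int_0^\infty r\,e^{-ar^2}\,dr = 1/(2a)$ to the reduced \eqref{CDFAP} and \eqref{CDFSK_Sink_2hop} yields the closed forms
\begin{align}
1 - F_{\gamma_{ap}}(\gamma) &= \frac{\pi\lambda_{ap}(1-\rho_{ap})}{\Lambda_1\gamma^{2/\alpha} + \pi\lambda_{ap}(1-\rho_{ap})}, \nonumber \\
1 - F_{\gamma_{sk}}(\gamma) &= \frac{\pi\lambda_{sk}}{\Lambda_2\gamma^{2/\beta} + \pi\lambda_{sk}}, \nonumber
\end{align}
while $\int_0^\infty e^{-bt}\,dt = 1/b$ applied to \eqref{CDFAP_E} and \eqref{CDFSK_E} gives the Fr\'echet-type forms $F_{\gamma_{s,e}}(\gamma) = \exp\{-\pi\lambda_e^s/(\Lambda_1\gamma^{2/\alpha})\}$ and $F_{\gamma_{ap,e}}(\gamma) = \exp\{-\pi\lambda_e^{ap}/(\Lambda_2\gamma^{2/\beta})\}$, with $\Lambda_1,\Lambda_2$ exactly as defined in the corollary.

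Third I would differentiate the two eavesdropper CDFs to obtain the densities $f_{\gamma_{s,e}}$ and $f_{\gamma_{ap,e}}$ required in \eqref{CDF_C_s_ap_1} and \eqref{CDF_C_s_sk_1}. A chain-rule calculation gives $f_{\gamma_{s,e}}(y) = \frac{2\pi\lambda_e^s}{\alpha\Lambda_1}\,y^{-2/\alpha-1}\exp\{-\pi\lambda_e^s/(\Lambda_1 y^{2/\alpha})\}$ and the analogous expression for $f_{\gamma_{ap,e}}$ under $(\alpha,\lambda_e^s,\Lambda_1)\mapsto(\beta,\lambda_e^{ap},\Lambda_2)$, which are precisely the density prefactors appearing inside the two brackets of \eqref{overal_ASR_special}.

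Finally I would use the normalization $\int_0^\infty f_{\gamma_{s,e}}(t)\,dt = 1$ to rewrite \eqref{CDF_C_s_ap_1} as $\Pr(C_s^{ap}>x) = \int_0^\infty f_{\gamma_{s,e}}(t)\,[1 - F_{\gamma_{ap}}(2^x(1+t)-1)]\,dt$, and treat \eqref{CDF_C_s_sk_1} identically. Substituting the closed forms from the second and third steps into these two integrals reproduces the two bracketed factors of \eqref{overal_ASR_special}, and inserting their product into \eqref{Overall_ASR_12} completes the derivation. The only genuine subtlety lies in the second step: both the Gamma factors $\Gamma(1-2/\alpha)$, $\Gamma(1-2/\beta)$ and the convergence of $\int_0^\infty e^{-bt}\,dt$ tacitly require $\alpha,\beta>2$ (the usual far-field path-loss condition), after which everything reduces to bookkeeping of the substitution $2^x(1+t)-1$ and the collection of constants into $\Lambda_1$ and $\Lambda_2$.
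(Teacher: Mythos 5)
Your derivation is correct and follows exactly the route the paper intends (the corollary is stated without an explicit proof, as a direct specialization of \eqref{Overall_ASR_12} with \eqref{CDF_C_s_ap_1}--\eqref{CDF_C_s_sk_1}): setting $M=1$ kills the Fa\`a di Bruno sums, dropping the noise terms reduces the radial integrals to $\int_0^\infty r e^{-ar^2}dr=1/(2a)$ and $\int_0^\infty e^{-bt}dt=1/b$, and differentiating the resulting Fr\'echet-type eavesdropper CDFs gives precisely the density prefactors in the two brackets of \eqref{overal_ASR_special}. Your closed forms for $1-F_{\gamma_{ap}}$, $1-F_{\gamma_{sk}}$, $f_{\gamma_{s,e}}$, and $f_{\gamma_{ap,e}}$ all check out against Corollaries 1 and 2, and the remark that $\alpha,\beta>2$ is tacitly required for convergence is a fair observation the paper leaves implicit.
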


\section{Numerical Examples}
In this section, we present numerical examples to show the average secrecy rate of the three-tier WSN. We assume that the sensor's transmit power $P_s=15$ dBm, the power spectral density of noise is $-170$ dBm/Hz, and the bandwidth is 1 MHz. {We also assume  that all the channel gains follow a complex Gaussian distribution with zero
mean and unit variance.} In all the figures, we see a precise match between the simulations and the exact analytical curves, which validate our analysis.


\subsection{Average Secrecy Rate between  Sensor and Access Point}


Fig.~\ref{Fig2} plots the average secrecy rate between the sensor and the access point  versus ${{\lambda _e^s} \mathord{\left/
 {\vphantom {{\lambda _e^s} {{\lambda _s}}}} \right.
 \kern-\nulldelimiterspace} {{\lambda _s}}}$. The analytical results are obtained from \eqref{first_Hop_ASR}. We first see that the average secrey rate decreases with increasing the density of eavesdroppers that intercepts the transmission between sensor and access point, due to the detrimental effects of eavesdropping.  We also see that the average secrecy rate increases with increasing the number of antennas at the access point, which results from the array again brought by using MRC at the access point.

\begin{figure}[t!]
    \begin{center}
        \includegraphics[width=2.8 in]{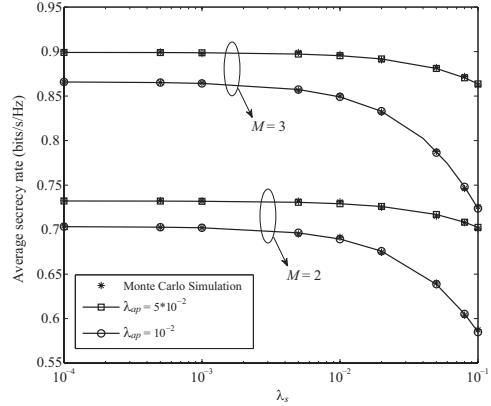}
        \caption{The average secrecy rate versus ${{\lambda _s}}$. $\rho_s=0.05$, $\rho_{ap}=0.5$, $\lambda _e^s=10^{-3}$, $\alpha=3.5$, $P_{ap}=25$ dBm.}
        \label{Fig3}
    \end{center}
\end{figure}

Fig.~\ref{Fig3} plots the average secrecy rate between the sensor and the access point  versus  $\lambda_s$ for various $\lambda_{ap}$ and $M$. The analytical results are obtained from \eqref{first_Hop_ASR}. An interesting observation is that for the same number of antennas $M$, the average secrecy rate is nearly invariable for $\lambda_s<2  \times
{10^{ - 3}}$, since the interference from  other sensors is much smaller than the interference from the active access points, and slightly increasing the interference from the sensor imposes negligible effect on the performance.  However, when  $\lambda_s>2  \times
{10^{ - 3}}$, the interference from other sensors is comparable with the interference from the active access points, and increasing the interference from the sensor degrades the secrecy performance. We also observe that increasing $\lambda_{ap}$ increases the average secrecy rate. This is because with more access points, the distance between the typical sensor and the typical access point becomes shorter, which improves the average secrecy rate. In addition, we find that increasing $\lambda_{ap}$  slows down the decreasing trend of average secrecy rate when $\lambda_s$ increases.

\subsection{Average Secrecy Rate between  Access point and Sink}

\begin{figure}[t!]
    \begin{center}
        \includegraphics[width=2.8 in]{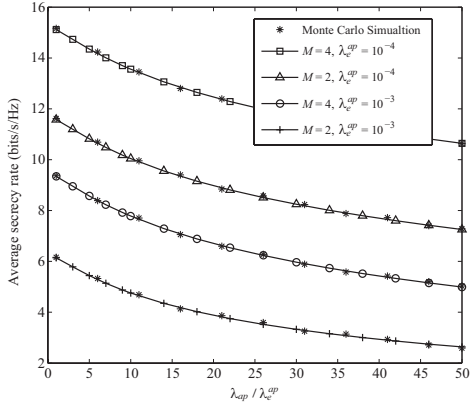}
        \caption{The average secrecy rate versus $\frac{{{\lambda _{ap}}}}{{\lambda _e^{ap}}}$. $\rho_{ap}=0.1$, $\lambda_{sk}=10^{-2}$, $\beta=3.5$, $P_{ap}=15$ dBm.}
        \label{FIg4}
    \end{center}
\end{figure}

\begin{figure}[t!]
    \begin{center}
    \vspace{-0.6cm}
        \includegraphics[width=2.8 in]{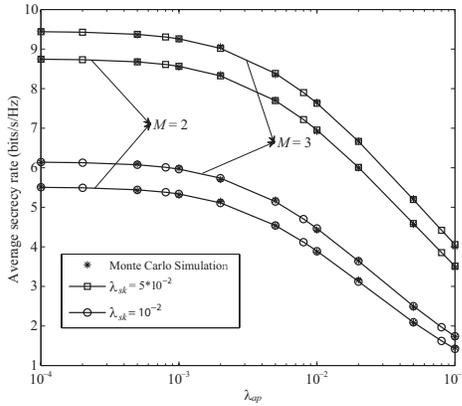}
        \caption{The average secrecy rate versus ${{\lambda _{ap}}}$. $\rho_{ap}=0.1$, $\beta=3$, $\lambda _e^{ap}=10^{-3}$, $P_{ap}=25$ dBm,}
        \label{Fig5}
    \end{center}
\end{figure}


Fig.~\ref{FIg4} plots the average secrecy rate between the access point and the sink  versus   ${{\lambda _e^{ap}} \mathord{\left/
 {\vphantom {{\lambda _e^{ap}} {{\lambda _{ap}}}}} \right.
 \kern-\nulldelimiterspace} {{\lambda _{ap}}}}$ for various ${\lambda _{ap}}$ and $M$. The analytical results are obtained from \eqref{second_Hop_ASR}.
 We first observe that  the average secrecy rate decreases with increasing ${{\lambda _e^{ap}} \mathord{\left/
 {\vphantom {{\lambda _e^{ap}} {{\lambda _{ap}}}}} \right.
 \kern-\nulldelimiterspace} {{\lambda _{ap}}}}$, which indicates that more access points need to be deployed as the density of eavesdroppers increases, to combat eavesdropping. Second, with the same number of antennas at the access point, the average secrecy rate decreases with increasing ${\lambda _e^{ap}}$. The average secrecy rate between  the access point and the sink improves with increasing the number of antennas at the access point $M$.

Fig.~\ref{Fig5}  plots the average secrecy rate between the access point and the sink  versus $\lambda_{ap}$ for various ${\lambda _{sk}}$ and $M$. The analytical results are obtained from \eqref{second_Hop_ASR}.
We observe that the average secrecy rate alters slightly for $\lambda_{ap}<2  \times
{10^{ - 3}}$, and decreases with increasing $\lambda_{ap}$ for $\lambda_{ap}>2  \times
{10^{ - 3}}$. This can be explained by the fact that  for $\lambda_{ap}<2  \times
{10^{ - 3}}$, the interference from the active access points is  relatively small compared with the noise, and increasing the number of access points scarcely influence the performance. However, for $\lambda_{ap}>2  \times
{10^{ - 3}}$, the interference from the access point imposes a dominant impact on the SINR between the access point and the sink, thus increasing the interference from the access points degrades the average secrecy rate. Another observation is that the average secrecy rate improves with increasing the  density of sink, because the distance between the typical access point and the corresponding sink becomes shorter.

\subsection{Overall Average Secrecy Rate}

\begin{figure}[t!]
    \begin{center}
        \includegraphics[width=2.8 in]{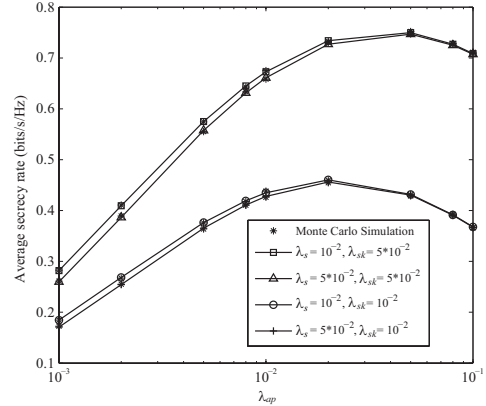}
        \caption{The average secrecy rate versus ${{\lambda _{ap}}}$. $P_{ap}=30$ dBm, $M=2$, $\rho_s=0.01$, $\rho_{ap}=0.1$, $\alpha=2.8$, $\beta=3.2$,  $\lambda _e^s=\lambda _e^{ap}=5*10^{-3}$.}
        \label{Fig6}
    \end{center}
\end{figure}

\begin{figure}[t!]
    \begin{center}
        \includegraphics[width=2.8 in]{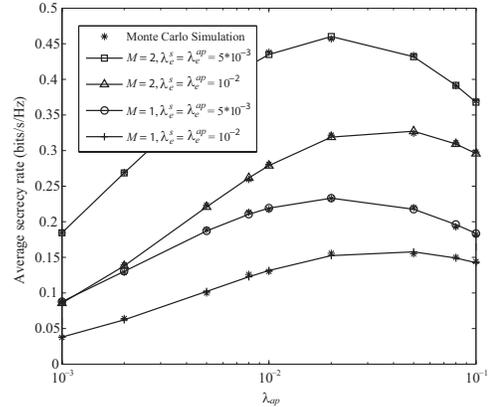}
        \caption{The average secrecy rate versus ${{\lambda _{ap}}}$. $P_{ap}=30$ dBm, $\rho_s=0.01$, $\rho_{ap}=0.1$, $\alpha=2.8$, $\beta=3.2$, $\lambda_s=\lambda_{sk}=10^{-2}$.}
        \label{Fig7}
    \end{center}
\end{figure}


Fig.~\ref{Fig6} plots the overall average secrecy rate versus $\lambda_{ap}$ for various $\lambda_{s}$ and $\lambda_{sk}$. The analytical results are obtained from \eqref{Overall_ASR_12}. Interestingly, we find that the overall average secrecy rate first increases, and then decreases with  increasing $\lambda_{ap}$, which implies that there exists an optimal $\lambda_{ap}$ to achieve the maximum average secrecy rate. This phenomenon can be well explained by the tradeoff between the benefits brought by  the shorter distance from the typical sensor to the typical access point   and the detrimental effects caused by  more interference from the active access points due to increasing $\lambda_{ap}$. It is also seen that the overall average secrecy rate can be improved by deploying more sinks, due to the shorter distance between the access point and the sink. It is further  demonstrated that deploying more sensors in this network may not greatly degrade the average secrecy rate due to the low transmit power of sensors. More importantly, it is shown that the optimal $\lambda_{ap}$ is more dependent on the $\lambda_{sk}$.

Fig.~\ref{Fig7} plots the overall average secrecy rate versus $\lambda_{ap}$ for various $\lambda_e^{s}$, $\lambda_e^{ap}$ and $M$. The analytical results are obtained from \eqref{Overall_ASR_12}. Similar as Fig.~\ref{Fig6}, we see that the overall average secrecy rate first increases, and then decreases with  increasing $\lambda_{ap}$. As expected, the average secrecy rate decreases with increasing eavesdroppers. It is indicated that the optimal $\lambda_{ap}$ for achieving the maximum average secrecy rate does not alter drastically with different $\lambda_e^{s}$ and $\lambda_e^{ap}$.

\section{Conclusion}
 {We have analyzed the physical layer security of three-tier WSNs. We have examined the impact of random locations and spatial densities of sensors, access points, sinks, and external eavesdroppers on the secrecy performance. We have also obtained new expressions for the average secrecy rate. Based on our analysis, we have established the importance of physical layer security in three-tier WSNs, where our results support useful guidelines on secure transmission in practical WSNs. An important result is the minimum number of sinks required for a target average secrecy rate, which facilitates secure node deployment design in WSNs.}



\appendices
\numberwithin{equation}{section}

\section{  Proof of Lemma 1}\label{aplemma1}
From \eqref{SINR_AP}, the CDF of ${\gamma _{ap}}$ is given by
\begin{align}\label{CDFAP_D1}
{F_{{\gamma _{ap}}}}\left( {{\gamma _{th}}} \right) =& \int_0^\infty  {\Pr \Big[ {\frac{{{{\left\| {{{\bf{h}}_{{s_0},a{p_0}}}} \right\|}^2}{r^{ - \alpha }}}}{{I{n_{ap}} + {{{\delta ^2}} \mathord{\left/
 {\vphantom {{{\delta ^2}} {{P_s}}}} \right.
 \kern-\nulldelimiterspace} {{P_s}}}}} \le {{\gamma _{th}}}} \Big]} {f_{\left| {{X_{{s_0},a{p_0}}}} \right|}}\left( r \right)dr \nonumber \\
  = & \int_0^\infty  {\Pr \Big[ {\frac{{{{\left\| {{{\bf{h}}_{{s_0},a{p_0}}}} \right\|}^2}{{\left| {{X_{{s_0},a{p_0}}}} \right|}^{ - \alpha }}}}{{I{n_{ap}} + {{{\delta ^2}} \mathord{\left/
 {\vphantom {{{\delta ^2}} {{P_s}}}} \right.
 \kern-\nulldelimiterspace} {{P_s}}}}} \le {\gamma _{th}}} \Big]} 2\pi {\lambda _{ap}}
 \nonumber \\
 & \left( {1 - {\rho _{ap}}} \right)r\exp \left( { - \pi {\lambda _{ap}}\left( {1 - {\rho _{ap}}} \right){r^2}} \right)dr,
\end{align}
where ${f_{\left| {{X_{{s_0},a{p_0}}}} \right|}}\left( r \right)$ is the PDF of the nearest distance between the access point and the typical sensor.
The CDF of the access point SINR at distance $r$ from its corresponding sensor  is given as
\begin{align}\label{CDFAP_D2}
\Pr \Big[ {\frac{{{{\left\| {{{\bf{h}}_{{s_0},a{p_0}}}} \right\|}^2}{r^{ - \alpha }}}}{{I{n_{ap}} + {{{\delta ^2}} \mathord{\left/
 {\vphantom {{{\delta ^2}} {{P_s}}}} \right.
 \kern-\nulldelimiterspace} {{P_s}}}}} \le {\gamma _{th}}} \Big] &
  \nonumber \\
 & \hspace{-4.3cm}   = 1 - \sum\limits_{m = 0}^{M - 1} {\frac{1}{{m!}}{\mathbbm{E}_{{\Phi _{s,a}}}}\Big\{ {{\mathbbm{E}_{{\Phi _{ap,a}}}}\Big\{ {\int_0^\infty  {{{\left[ {{\gamma _{th}}{r^\alpha }\left( {\tau  + {{{\delta ^2}} \mathord{\left/
 {\vphantom {{{\delta ^2}} {{P_s}}}} \right.
 \kern-\nulldelimiterspace} {{P_s}}}} \right)} \right]}^m}} } } }
  \nonumber \\
 & \hspace{-3.6cm} { {\exp \left[ { - {\gamma _{th}}{r^\alpha }\left( {\tau  + {{{\delta ^2}} \mathord{\left/
 {\vphantom {{{\delta ^2}} {{P_s}}}} \right.
 \kern-\nulldelimiterspace} {{P_s}}}} \right)} \right]d\Pr \left( {I{n_{ap}} \le \tau } \right)} \Big\}} \Big\}.
\end{align}

We then substitute $ {\left( { - \left( {\tau  + {{{\delta ^2}} \mathord{\left/
 {\vphantom {{{\delta ^2}} {{P_s}}}} \right.
 \kern-\nulldelimiterspace} {{P_s}}}} \right){\gamma _{th}}} \right)^m}{e^{ - \left( {\tau  + {{{\delta ^2}} \mathord{\left/
 {\vphantom {{{\delta ^2}} {{P_s}}}} \right.
 \kern-\nulldelimiterspace} {{P_s}}}} \right)\gamma _{th}^{\left\{ s \right\}}{r^\alpha }}}$\\$={\left. {\frac{{{d^m}\left( {{e^{ - {\gamma _{th}}x\left( {\tau  + {{{\delta ^2}} \mathord{\left/
 {\vphantom {{{\delta ^2}} {{P_s}}}} \right.
 \kern-\nulldelimiterspace} {{P_s}}}} \right)}}} \right)}}{{d{x^m}}}} \right|_{x = {r^\alpha }}}$ into \eqref{CDFAP_D2}, we rewrite the CDF of the access point SINR at distance $r$ from its corresponding sensor as
 \begin{align}
\Pr \Big[ {\frac{{{{\left\| {{{\bf{h}}_{{s_0},a{p_0}}}} \right\|}^2}{r^{ - \alpha }}}}{{I{n_{ap}} + {{{\delta ^2}} \mathord{\left/
 {\vphantom {{{\delta ^2}} {{P_s}}}} \right.
 \kern-\nulldelimiterspace} {{P_s}}}}} \le {\gamma _{th}}} \Big] =& 1 - {\mathbbm{E}_{{\Phi _{s,a}}}}\Big\{ {{\mathbbm{E}_{{\Phi _{ap,a}}}}\Big\{ {} }
 \nonumber \\
 & \hspace{-3.6cm}  { {\int_0^\infty  {\exp \left[ { - {\gamma _{th}}{r^\alpha }\left( {\tau  + {{{\delta ^2}} \mathord{\left/
 {\vphantom {{{\delta ^2}} {{P_s}}}} \right.
 \kern-\nulldelimiterspace} {{P_s}}}} \right)} \right]d\Pr \left( {I{n_{ap}} \le \tau } \right)} } \Big\}} \Big\}
  \nonumber \\
 & \hspace{-3.6cm} - \sum\limits_{m = 1}^{M - 1} {\frac{{{{\left( {{r^\alpha }} \right)}^m}}}{{m!{{\left( { - 1} \right)}^m}}}{\mathbbm{E}_{{\Phi _{s,a}}}}\Bigg\{ {{\mathbbm{E}_{{\Phi _{ap,a}}}}\Bigg\{ {} } }
  \nonumber \\
 & \hspace{-3.6cm}{\int_0^\infty  {{{\left. {\frac{{{d^m}\left( {{e^{ - {\gamma _{th}}x\left( {\tau  + {{{\delta ^2}} \mathord{\left/
 {\vphantom {{{\delta ^2}} {{P_s}}}} \right.
 \kern-\nulldelimiterspace} {{P_s}}}} \right)}}} \right)}}{{d{x^m}}}} \right|}_{x = {r^\alpha }}}} d\Pr \left( {I{n_{ap}} \le \tau } \right)} \Bigg\} \Bigg\}
  \nonumber 
\end{align}
\begin{align} \label{CDFAP_D3}
 &  = 1 - \exp \left( { - {{{\gamma _{th}}{r^\alpha }{\delta ^2}} \mathord{\left/
 {\vphantom {{{\gamma _{th}}{r^\alpha }{\delta ^2}} {{P_s}}}} \right.
 \kern-\nulldelimiterspace} {{P_s}}}} \right){\mathcal{L}_{I{n_{ap}}}}\left( {{\gamma _{th}}{r^\alpha }} \right)
 \nonumber \\ 
 &  - \sum\limits_{m = 1}^{M - 1} {\frac{{{{\left( {{r^\alpha }} \right)}^m}}}{{m!{{\left( { - 1} \right)}^m}}}{{\left. {\frac{{{d^m}\left( {\exp \left( { - {{{\gamma _{th}}x{\delta ^2}} \mathord{\left/
 {\vphantom {{{\gamma _{th}}x{\delta ^2}} {{P_s}}}} \right.
 \kern-\nulldelimiterspace} {{P_s}}}} \right){\mathcal{L}_{I{n_{ap}}}}\left( {{\gamma _{th}}x} \right)} \right)}}{{d{x^m}}}} \right|}_{x = {r^\alpha }}}} .
\end{align}
Remind that ${I_{s,ap}} = {\sum _{i \in {\Phi _{s,a}}\backslash \left\{ {{s_0}} \right\}}}{\left| {\frac{{{{\bf{h}}_{{s_0},a{p_0}}}^\dag }}{{\left\| {{{\bf{h}}_{{s_0},a{p_0}}}} \right\|}}{{\bf{h}}_{i,a{p_0}}}} \right|^2}{\left| {{X_{i,a{p_0}}}} \right|^{ - \alpha }}$, using Slivnyak's theorem, the Laplace transform of $I_{s,ap}$ is
  \begin{align}\label{LAP_sap}
&{\mathcal{L}_{{I_{s,ap}}}}\left( s \right)  \nonumber \\
 &\hspace{0.1cm} ={\mathbbm{E}_{{\Phi _s}}}\left[ {\exp \left\{ { - s{\sum _{i \in {\Phi _{s,a}}\backslash \left\{ {{s_0}} \right\}}}{{\left| {\frac{{{{\bf{h}}_{{s_0},a{p_0}}}^\dag }}{{\left\| {{{\bf{h}}_{{s_0},a{p_0}}}} \right\|}}{{\bf{h}}_{i,a{p_0}}}} \right|}^2}{{\left| {{X_{i,a{p_0}}}} \right|}^{ - \alpha }}} \right\}} \right]
 \nonumber \\
 &\hspace{0.1cm} \mathop  = \limits^{\left( a \right)}
\exp \left\{ { - 2\pi {\lambda _s}{\rho _s}\int_0^\infty  {\left( {1 - {\mathcal{L}_{\frac{{{{\bf{h}}_{{s_0},a{p_0}}}^\dag }}{{\left\| {{{\bf{h}}_{{s_0},a{p_0}}}} \right\|}}{{\bf{h}}_{i,a{p_0}}}}}\left( {s{y^{ - \alpha }}} \right)} \right)ydy} } \right\}
  \nonumber \\
 & \hspace{0.1cm} \mathop  = \limits^{\left( b \right)} \exp \left\{ { - 2\pi {\lambda _s}{\rho _s}\int_0^\infty  {\left( {1 - \frac{1}{{1 + s{y^{ - \alpha }}}}} \right)ydy} } \right\}
  \nonumber \\
 & \hspace{0.1cm}=\exp \left\{ { - {\lambda _s}{\rho _s}\pi \Gamma \left( {1 + {2 \mathord{\left/
 {\vphantom {2 \alpha }} \right.
 \kern-\nulldelimiterspace} \alpha }} \right)\Gamma \left( {1 - {2 \mathord{\left/
 {\vphantom {2 \alpha }} \right.
 \kern-\nulldelimiterspace} \alpha }} \right){s^{{2 \mathord{\left/
 {\vphantom {2 \alpha }} \right.
 \kern-\nulldelimiterspace} \alpha }}}} \right\},
\end{align}
In \eqref{LAP_sap}, $(a)$ follows from  the generating functionnal of HPPP in \cite{stoyanstochastic}, $(b)$ follows from  the fact that  ${\left| {\frac{{{{\bf{h}}_{{s_0},a{p_0}}}^\dag }}{{\left\| {{{\bf{h}}_{{s_0},a{p_0}}}} \right\|}}{{\bf{h}}_{i,a{p_0}}}} \right|^2} \sim \exp \left( 1 \right)$.

Since ${I_{ap,ap}} = \mu {\sum _{j \in {\Phi _{ap,a}}\backslash \left\{ {a{p_0}} \right\}}}{\left| {\frac{{{{\bf{h}}_{{s_0},a{p_0}}}^\dag }}{{\left\| {{{\bf{h}}_{{s_0},a{p_0}}}} \right\|}}{{\bf{H}}_{j,a{p_0}}}\frac{{{{\bf{h}}_{j,s{k_j}}}^\dag }}{{\left\| {{{\bf{h}}_{j,s{k_j}}}} \right\|}}} \right|^2}$
\\
${\left| {{X_{j,a{p_0}}}} \right|^{ - \alpha }} = \mu {\sum _{j \in {\Phi _{ap}}\backslash \left\{ {a{p_0}} \right\}}}{H_j^{ap,ap}}{\left| {{X_{j,a{p_0}}}} \right|^{ - \alpha }}$, the Laplace transform of ${I_{ap,ap}}$ is
\begin{align}\label{LAP_apap}
&{\mathcal{L}_{{I_{ap,ap}}}}\left( s \right)  \nonumber \\
 &\hspace{0.3cm} =\exp \left( { - \smallint \left[ {1 - {\mathbbm{E}_{h}}\left( {\exp \left( { - s\mu {H_j^{ap,ap}}{y^{ - \alpha }}} \right)} \right)} \right]{\lambda _{ap}}{\rho _{ap}}2\pi ydy} \right)
 \nonumber \\
 &\hspace{0.3cm} \mathop  = \limits^{\left( c\right)}\exp \left\{ { - {\lambda _{ap}}{\rho _{ap}}\pi {\mu ^{\frac{2}{\alpha }}}{\mathbbm{E}_h}\left\{ {\left({H_j^{ap,ap}}\right)^{\frac{2}{\alpha }}} \right\}\Gamma \left( {1 - \frac{2}{\alpha }} \right){s^{\frac{2}{\alpha }}}} \right\}
  \nonumber \\
 & \hspace{0.3cm} \mathop  = \limits^{\left( d\right)} \exp \left\{ { - {\lambda _{ap}}{\rho _{ap}}\pi {\mu ^{\frac{2}{\alpha }}}\Gamma \left( {1 + {2 \mathord{\left/
 {\vphantom {2 \alpha }} \right.
 \kern-\nulldelimiterspace} \alpha }} \right)\Gamma \left( {1 - {2 \mathord{\left/
 {\vphantom {2 \alpha }} \right.
 \kern-\nulldelimiterspace} \alpha }} \right){s^{{2 \mathord{\left/
 {\vphantom {2 \alpha }} \right.
 \kern-\nulldelimiterspace} \alpha }}}} \right\},
\end{align}
where $(c)$ follows from the generating functionnal of HPPP in \cite{stoyanstochastic},
 $(d)$ follows from $H_j\sim \exp \left( 1 \right)$.

 With the Laplace transform of ${I_{s,ap}}$ and ${I_{ap,ap}}$, we derive the Laplace transform of $In_{ap}$ as
\begin{align}\label{LAP_inap_sensor}
&{\mathcal{L}_{I{n_{ap}}}}\left( s \right)= {\mathcal{L}_{{I_{s,ap}}}}\left( s \right) {\mathcal{L}_{{I_{ap,ap}}}}\left( s \right) =  \nonumber\\
&\exp \left\{ { - \left( {{\lambda _s}{\rho _s} + {\lambda _{ap}}{\rho _{ap}}{\mu ^{\frac{2}{\alpha }}}} \right)\pi \Gamma \left( {1 + {2 \mathord{\left/
 {\vphantom {2 \alpha }} \right.
 \kern-\nulldelimiterspace} \alpha }} \right)\Gamma \left( {1 - {2 \mathord{\left/
 {\vphantom {2 \alpha }} \right.
 \kern-\nulldelimiterspace} \alpha }} \right){s^{{2 \mathord{\left/
 {\vphantom {2 \alpha }} \right.
 \kern-\nulldelimiterspace} \alpha }}}} \right\}.
\end{align}

Substituting \eqref{LAP_inap_sensor} into \eqref{CDFAP_D3}, we obtain
\begin{align}\label{CDFAP_D5}
&\hspace{-0.5cm}\Pr \Big[ {\frac{{{{\left\| {{{\bf{h}}_{{s_0},a{p_0}}}} \right\|}^2}{r^{ - \alpha }}}}{{I{n_{ap}} + {{{\delta ^2}} \mathord{\left/
 {\vphantom {{{\delta ^2}} {{P_s}}}} \right.
 \kern-\nulldelimiterspace} {{P_s}}}}} \le {\gamma _{th}}} \Big]  = 1 - \exp\left\{ { - \left( {{\lambda _s}{\rho _s} + {\lambda _{ap}}{\rho _{ap}}{\mu ^{\frac{2}{\alpha }}}} \right)}\right.\nonumber\\
&\left.{\pi \Gamma \left( {1 + {2 \mathord{\left/
 {\vphantom {2 \alpha }} \right.
 \kern-\nulldelimiterspace} \alpha }} \right)\Gamma \left( {1 - {2 \mathord{\left/
 {\vphantom {2 \alpha }} \right.
 \kern-\nulldelimiterspace} \alpha }} \right){{\left( {{\gamma _{th}}} \right)}^{{2 \mathord{\left/
 {\vphantom {2 \alpha }} \right.
 \kern-\nulldelimiterspace} \alpha }}}{r^2}}{ - {{{\gamma _{th}}{r^\alpha }{\delta ^2}} \mathord{\left/
 {\vphantom {{{\gamma _{th}}{r^\alpha }{\delta ^2}} {{P_s}}}} \right.
 \kern-\nulldelimiterspace} {{P_s}}}}\right\}\nonumber\\
 &-\sum\limits_{m = 1}^{M - 1} {\frac{{{{\left( {{r^\alpha }} \right)}^m}}}{{m!{{\left( { - 1} \right)}^m}}}{{\left. {\frac{{{d^m}\left( {V\left( x \right)} \right)}}{{d{x^m}}}} \right|}_{x = {r^\alpha }}}},
\end{align}
where $V\left( x \right) =  \exp \left\{  - \left( {{\lambda _s}{\rho _s} + {\lambda _{ap}}{\rho _{ap}}{\mu ^{\frac{2}{\alpha }}}} \right)\pi \Gamma \left( {1 + {2 \mathord{\left/
 {\vphantom {2 \alpha }} \right.
 \kern-\nulldelimiterspace} \alpha }} \right)\right.$\\$\left.\Gamma \left( {1 - {2 \mathord{\left/
 {\vphantom {2 \alpha }} \right.
 \kern-\nulldelimiterspace} \alpha }} \right){{\left( {{\gamma _{th}}x} \right)}^{{2 \mathord{\left/
 {\vphantom {2 \alpha }} \right.
 \kern-\nulldelimiterspace} \alpha }}} - {{{\gamma _{th}}x{\delta ^2}} \mathord{\left/
 {\vphantom {{{\gamma _{th}}x{\delta ^2}} {{P_s}}}} \right.
 \kern-\nulldelimiterspace} {{P_s}}}\right\}$.

 We then apply the Fa$\grave{a}$
 di Bruno's formula to solve the derivative of $m$th order as follows:
\begin{align}\label{CDFAP_D6}
&\hspace{-0.2cm}\Pr \Big[ {\frac{{{{\left\| {{{\bf{h}}_{{s_0},a{p_0}}}} \right\|}^2}{r^{ - \alpha }}}}{{I{n_{ap}} + {{{\delta ^2}} \mathord{\left/
 {\vphantom {{{\delta ^2}} {{P_s}}}} \right.
 \kern-\nulldelimiterspace} {{P_s}}}}} \le {\gamma _{th}}} \Big] =1 - \exp\Big\{{ - \left( {{\lambda _s}{\rho _s} + {\lambda _{ap}}{\rho _{ap}}{\mu ^{\frac{2}{\alpha }}}} \right)}\nonumber\\
 &{\pi \Gamma \left( {1 + {2 \mathord{\left/
 {\vphantom {2 \alpha }} \right.
 \kern-\nulldelimiterspace} \alpha }} \right)\Gamma \left( {1 - {2 \mathord{\left/
 {\vphantom {2 \alpha }} \right.
 \kern-\nulldelimiterspace} \alpha }} \right)}{{{\left( {{\gamma _{th}}} \right)}^{{2 \mathord{\left/
 {\vphantom {2 \alpha }} \right.
 \kern-\nulldelimiterspace} \alpha }}}{r^2} - {{{\gamma _{th}}{r^\alpha }{\delta ^2}} \mathord{\left/
 {\vphantom {{{\gamma _{th}}{r^\alpha }{\delta ^2}} {{P_s}}}} \right.
 \kern-\nulldelimiterspace} {{P_s}}}}\Big\}-\nonumber\\
 &\sum\limits_{m = 1}^{M - 1} {\frac{{{{\left( {{r^\alpha }} \right)}^m}}}{{{{\left( { - 1} \right)}^m}}}\sum {\frac{1}{{\prod\limits_{l = 1}^m {{m_l}!l{!^{{m_l}}}} }}} }\exp\Big\{{ - \left( {{\lambda _s}{\rho _s} + {\lambda _{ap}}{\rho _{ap}}{\mu ^{{2 \mathord{\left/
 {\vphantom {2 \alpha }} \right.
 \kern-\nulldelimiterspace} \alpha }}}} \right)}\nonumber\\
 &{\pi \Gamma \left( {1 + {2 \mathord{\left/
 {\vphantom {2 \alpha }} \right.
 \kern-\nulldelimiterspace} \alpha }} \right)\Gamma \left( {1 - {2 \mathord{\left/
 {\vphantom {2 \alpha }} \right.
 \kern-\nulldelimiterspace} \alpha }} \right)}{{{\left( {{\gamma _{th}}} \right)}^{{2 \mathord{\left/
 {\vphantom {2 \alpha }} \right.
 \kern-\nulldelimiterspace} \alpha }}}{r^2} - {{{\gamma _{th}}{r^\alpha }{\delta ^2}} \mathord{\left/
 {\vphantom {{{\gamma _{th}}{r^\alpha }{\delta ^2}} {{P_s}}}} \right.
 \kern-\nulldelimiterspace} {{P_s}}}}\Big\}\nonumber\\
 &\Big[ { - {2 \mathord{\left/
 {\vphantom {2 \alpha }} \right.
 \kern-\nulldelimiterspace} \alpha }\left( {{\lambda _s}{\rho _s} + {\lambda _{ap}}{\rho _{ap}}{\mu ^{{2 \mathord{\left/
 {\vphantom {2 \alpha }} \right.
 \kern-\nulldelimiterspace} \alpha }}}} \right)} {\pi \Gamma \left( {1 + {2 \mathord{\left/
 {\vphantom {2 \alpha }} \right.
 \kern-\nulldelimiterspace} \alpha }} \right)\Gamma \left( {1 - {2 \mathord{\left/
 {\vphantom {2 \alpha }} \right.
 \kern-\nulldelimiterspace} \alpha }} \right)}\nonumber\\
 & {{{\left( {{\gamma _{th}}} \right)}^{\frac{2}{\alpha }}}{r^{\left( {2 - \alpha } \right)}}}{ - {{{\gamma _{th}}{\delta ^2}} \mathord{\left/
 {\vphantom {{{\gamma _{th}}{\delta ^2}} {{P_s}}}} \right.
 \kern-\nulldelimiterspace} {{P_s}}}}\Big]^{m_1}\prod\limits_{l = 2}^m \Big[ { - \left( {{\lambda _s}{\rho _s} + {\lambda _{ap}}{\rho _{ap}}{\mu ^{{2 \mathord{\left/
 {\vphantom {2 \alpha }} \right.
 \kern-\nulldelimiterspace} \alpha }}}} \right)}\nonumber\\
  &\pi \Gamma \left( {1 + {2 \mathord{\left/
 {\vphantom {2 \alpha }} \right.
 \kern-\nulldelimiterspace} \alpha }} \right)\Gamma \left( {1 - {2 \mathord{\left/
 {\vphantom {2 \alpha }} \right.
 \kern-\nulldelimiterspace} \alpha }} \right){{{\left( {{\gamma _{th}}} \right)}^{\frac{2}{\alpha }}}\prod\limits_{j = 0}^{l - 1} {\left( {{2 \mathord{\left/
 {\vphantom {2 \alpha }} \right.
 \kern-\nulldelimiterspace} \alpha }- j} \right){r^{2 - l\alpha }}} }\Big]^{m_l}.
 \end{align}

Substituting \eqref{CDFAP_D6} into \eqref{CDFAP_D1}, we derive the CDF of $\gamma_{ap}$ in \eqref{CDFAP}.

\section{  Proof of Lemma 2}\label{aplemma2}

From \eqref{SINR_Sensor_Eve}, the CDF of $\gamma_{s,e}$ is given by
\begin{align}
{F_{{\gamma _{s,e}}}}\left( {{\gamma _{th}}} \right) & =\Pr \Bigg\{ {\mathop {\max }\limits_{{e_k} \in {\Phi _{s,e}}} \Bigg\{ {\frac{{{{\left| {{h_{{s_0},{e_k}}}} \right|}^2}{{\left| {{X_{{s_0},{e_k}}}} \right|}^{ - \alpha }}}}{{I{n_{s,e}} + {{{\delta ^2}} \mathord{\left/
 {\vphantom {{{\delta ^2}} {{P_s}}}} \right.
 \kern-\nulldelimiterspace} {{P_s}}}}}} \Bigg\} \le {\gamma _{th}}} \Bigg\}
 \nonumber \\
 & \hspace{-1.6 cm}  \mathop  = \limits^{\left( a \right)} \exp \left\{ { - \lambda _e^s{\smallint _{{R^2}}}{e^{ - {{{\delta ^2}{\gamma _{th}}{{\left| {{X_{{s_0},{e_k}}}} \right|}^\alpha }} \mathord{\left/
 {\vphantom {{{\delta ^2}{\gamma _{th}}{{\left| {{X_{{s_0},{e_k}}}} \right|}^\alpha }} {{P_s}}}} \right.
 \kern-\nulldelimiterspace} {{P_s}}}}}{{\cal L}_{I{n_{s,e}}}}\left( {{\gamma _{th}}{{\left| {{X_{{s_0},{e_k}}}} \right|}^\alpha }} \right)} \right.
  \nonumber \\ 
  &  \hspace{-1.2cm} \left. {d\left| {{X_{{s_0},{e_k}}}} \right|} \right\} 
 \nonumber \\
 & \hspace{-1.6 cm}\mathop  = \limits^{\left( b \right)} \exp \left\{ { - 2\pi \lambda _e^s\int_0^\infty  {{e^{ - {{{\delta ^2}{\gamma _{th}}{r^\alpha }} \mathord{\left/
 {\vphantom {{{\delta ^2}{\gamma _{th}}{r^\alpha }} {{P_s}}}} \right.
 \kern-\nulldelimiterspace} {{P_s}}}}}{\mathcal{L}_{I{n_{s,e}}}}\left( {{\gamma _{th}}{r^\alpha }} \right)rdr} } \right\},
\end{align}
where $(a)$ follows from the generating functionnal of HPPP in \cite{stoyanstochastic}, $(b)$ is obtained by converting cartesian coordinates to polar coordinates.

Using the generating functionnal of HPPP in \cite{stoyanstochastic}, ${\left| {{h_{i,{e_k}}}} \right|^2} \sim \exp \left( 1 \right)$, and $H_j^{ap,e} = {\left| {{{\bf{h}}_{j,{e_k}}}\frac{{{{\bf{h}}_{j,s{k_j}}}^\dag }}{{\left\| {{{\bf{h}}_{j,s{k_j}}}} \right\|}}} \right|^2} \sim \exp \left( 1 \right)$, we derive the Laplace transform of $I_{s,e}$ and $I_{ap,e}$ as
 \begin{align}\label{LAP_se}
&{\mathcal{L}_{{I_{s,e}}}}\left( s \right)  \nonumber \\
 &\hspace{0.3cm} =\exp \left( { - \smallint \left[ {1 - {\mathbbm{E}_h}\left( {\exp \left( { - s{{\left| {{h_{i,{e_k}}}} \right|}^2}{y^{ - \alpha }}} \right)} \right)} \right]{\lambda _s}{\rho _s}2\pi ydy} \right)
 \nonumber \\
 &\hspace{0.3cm}    = \exp \left\{ { - {\lambda _s}{\rho _s}\pi \Gamma \left( {1 + {2 \mathord{\left/
 {\vphantom {2 \alpha }} \right.
 \kern-\nulldelimiterspace} \alpha }} \right)\Gamma \left( {1 - {2 \mathord{\left/
 {\vphantom {2 \alpha }} \right.
 \kern-\nulldelimiterspace} \alpha }} \right){s^{{2 \mathord{\left/
 {\vphantom {2 \alpha }} \right.
 \kern-\nulldelimiterspace} \alpha }}}} \right\},
\end{align}
and
\begin{align}\label{LAP_ape}
&{\mathcal{L}_{{I_{ap,e}}}}\left( s \right)  \nonumber \\
 &\hspace{0.3cm} =\exp \left( { - \smallint \left[ {1 - {\mathbbm{E}_h}\left( {\exp \left( { - s\mu H_j^{ap,e}{y^{ - \alpha }}} \right)} \right)} \right]{\lambda _{ap}}{\rho _{ap}}2\pi ydy} \right)
 \nonumber \\
 &\hspace{0.3cm} = \exp \left\{ { - {\lambda _{ap}}{\rho _{ap}}\pi {\mu ^{\frac{2}{\alpha }}}\Gamma \left( {1 + {2 \mathord{\left/
 {\vphantom {2 \alpha }} \right.
 \kern-\nulldelimiterspace} \alpha }} \right)\Gamma \left( {1 - {2 \mathord{\left/
 {\vphantom {2 \alpha }} \right.
 \kern-\nulldelimiterspace} \alpha }} \right){s^{{2 \mathord{\left/
 {\vphantom {2 \alpha }} \right.
 \kern-\nulldelimiterspace} \alpha }}}} \right\},
\end{align}
respectively.

With the Laplace transform of ${I_{s,e}}$ and ${I_{ap,e}}$, we derive the Laplace transform of $In_{s,e}$ as
\begin{align}\label{LAP_inap}
{\mathcal{L}_{I{n_{s,e}}}}\left( s \right) =&\exp \left\{ { - {\lambda _s}{\rho _s}\pi \Gamma \left( {1 + {2 \mathord{\left/
 {\vphantom {2 \alpha }} \right.
 \kern-\nulldelimiterspace} \alpha }} \right)\Gamma \left( {1 - {2 \mathord{\left/
 {\vphantom {2 \alpha }} \right.
 \kern-\nulldelimiterspace} \alpha }} \right){s^{{2 \mathord{\left/
 {\vphantom {2 \alpha }} \right.
 \kern-\nulldelimiterspace} \alpha }}} - {\lambda _{ap}}} \right.
\nonumber \\ 
 & \hspace{-1cm} \left. {{\rho _{ap}}\pi {\mu ^{{2 \mathord{\left/
 {\vphantom {2 \alpha }} \right.
 \kern-\nulldelimiterspace} \alpha }}}\Gamma \left( {1 + {2 \mathord{\left/
 {\vphantom {2 \alpha }} \right.
 \kern-\nulldelimiterspace} \alpha }} \right)\Gamma \left( {1 - {2 \mathord{\left/
 {\vphantom {2 \alpha }} \right.
 \kern-\nulldelimiterspace} \alpha }} \right){s^{{2 \mathord{\left/
 {\vphantom {2 \alpha }} \right.
 \kern-\nulldelimiterspace} \alpha }}}} \right\}.
\end{align}

Substituting \eqref{LAP_inap} into \eqref{CDFse_D1}, we derive the CDF of $\gamma_{s,e}$ in \eqref{CDFAP_E}.

\section{  Proof of Lemma 3}\label{aplemma3}
From \eqref{SINR_sink}, the CDF of ${\gamma _{sk}}$ is given by
\begin{align}\label{CDFSK_D1}
{F_{{\gamma _{sk}}}}\left( {{\gamma _{th}}} \right)
 &= \int_0^\infty  {\Pr \Bigg[ {\frac{{{{\left\| {{{\bf{g}}_{a{p_0},S{k_0}}}} \right\|}^2}{r^{ - \beta }}}}{{I{n_{ap,sk}} + {{{\delta ^2}} \mathord{\left/
 {\vphantom {{{\delta ^2}} {{P_{ap}}}}} \right.
 \kern-\nulldelimiterspace} {{P_{ap}}}}}} \le {\gamma _{th}}} \Bigg]} 2\pi {\lambda _{sk}}r
 \nonumber \\
 & \hspace{0.4cm}  \exp \left( { - \pi {\lambda _{sk}}{r^2}} \right)dr.
\end{align}

The CDF of the sink SINR at distance $r$ from its corresponding access point  is derived as
\begin{align}\label{CDFSK_D2}
\Pr \Bigg[ {\frac{{{{\left\| {{{\bf{g}}_{a{p_0},s{k_0}}}} \right\|}^2}{r^{ - \beta }}}}{{I{n_{ap,sk}} + {{{\delta ^2}} \mathord{\left/
 {\vphantom {{{\delta ^2}} {{P_{ap}}}}} \right.
 \kern-\nulldelimiterspace} {{P_{ap}}}}}} \le {\gamma _{th}}} \Bigg]&
  \nonumber \\
 & \hspace{-4.3cm}    = 1 - \sum\limits_{m = 0}^{M - 1} {\frac{1}{{m!}}{\mathbbm{E}_{{\Phi _{ap,a}}}}} \Big\{ {\int_0^\infty  {{{\left[ {{\gamma _{th}}{r^\beta }\left( {\tau  + {{{\delta ^2}} \mathord{\left/
 {\vphantom {{{\delta ^2}} {{P_{ap}}}}} \right.
 \kern-\nulldelimiterspace} {{P_{ap}}}}} \right)} \right]}^m}} }
  \nonumber \\
 & \hspace{-3.6cm} {\exp \left[ { - {\gamma _{th}}{r^\beta }\left( {\tau  + {{{\delta ^2}} \mathord{\left/
 {\vphantom {{{\delta ^2}} {{P_{ap}}}}} \right.
 \kern-\nulldelimiterspace} {{P_{ap}}}}} \right)} \right]d\Pr \left( {I{n_{ap,sk}} \le \tau } \right)} \Big\}.
\end{align}

Note that ${\left( { - \left( {\tau  + {{{\delta ^2}} \mathord{\left/
 {\vphantom {{{\delta ^2}} {{P_{ap}}}}} \right.
 \kern-\nulldelimiterspace} {{P_{ap}}}}} \right){\gamma _{th}}} \right)^m}{e^{ - \left( {\tau  + {{{\delta ^2}} \mathord{\left/
 {\vphantom {{{\delta ^2}} {{P_{ap}}}}} \right.
 \kern-\nulldelimiterspace} {{P_{ap}}}}} \right)\gamma _{th}^{\left\{ s \right\}}{r^\beta }}}={\left. {\frac{{{d^m}\left( {{e^{ - {\gamma _{th}}x\left( {\tau  + {{{\delta ^2}} \mathord{\left/
 {\vphantom {{{\delta ^2}} {{P_{ap}}}}} \right.
 \kern-\nulldelimiterspace} {{P_{ap}}}}} \right)}}} \right)}}{{d{x^m}}}} \right|_{x = {r^\beta }}} $, we rewrite \eqref{CDFSK_D2} as
\begin{align}\label{CDFSK_D3}
\Pr \Bigg[ {\frac{{{{\left\| {{{\bf{g}}_{a{p_0},s{k_0}}}} \right\|}^2}{r^{ - \beta }}}}{{I{n_{ap,sk}} + {{{\delta ^2}} \mathord{\left/
 {\vphantom {{{\delta ^2}} {{P_{ap}}}}} \right.
 \kern-\nulldelimiterspace} {{P_{ap}}}}}} \le {\gamma _{th}}} \Bigg] =&  1 - {\mathbbm{E}_{{\Phi _{ap,a}}}}
 \nonumber \\
 & \hspace{-4.6cm} \left\{ {\int_0^\infty  {\exp \left[ { - {\gamma _{th}}{r^\beta }\left( {\tau  + {{{\delta ^2}} \mathord{\left/
 {\vphantom {{{\delta ^2}} {{P_{ap}}}}} \right.
 \kern-\nulldelimiterspace} {{P_{ap}}}}} \right)} \right]d\Pr \left( {I{n_{ap,sk}} \le \tau } \right)} } \right\}
  \nonumber \\
 & \hspace{-4.6cm}  - \sum\limits_{m = 1}^{M - 1} {\frac{{{{\left( {{r^\beta }} \right)}^m}}}{{m!{{\left( { - 1} \right)}^m}}}{\mathbbm{E}_{{\Phi _{ap,a}}}}} \left\{ {\int_0^\infty  {{{\left. {\frac{{{d^m}\left( {{e^{ - {\gamma _{th}}x\left( {\tau  + {{{\delta ^2}} \mathord{\left/
 {\vphantom {{{\delta ^2}} {{P_{ap}}}}} \right.
 \kern-\nulldelimiterspace} {{P_{ap}}}}} \right)}}} \right)}}{{d{x^m}}}} \right|}_{x = {r^\beta }}}} } \right.
 \nonumber \\
 & \hspace{-4.6cm}
 {d\Pr \left( {I{n_{ap,sk}} \le \tau } \right)} \Bigg\}
 \nonumber \\
 & \hspace{-5cm}
 = 1 - \exp \left( { - {{{\gamma _{th}}{r^\beta }{\delta ^2}} \mathord{\left/
 {\vphantom {{{\gamma _{th}}{r^\beta }{\delta ^2}} {{P_{ap}}}}} \right.
 \kern-\nulldelimiterspace} {{P_{ap}}}}} \right){\mathcal{L}_{I{n_{ap,sk}}}}\left( {{\gamma _{th}}{r^\beta }} \right) - \sum\limits_{m = 1}^{M - 1} {\frac{{{{\left( {{r^\beta }} \right)}^m}}}{{m!{{\left( { - 1} \right)}^m}}}}
  \nonumber \\
 & \hspace{-4.6cm} {\left. {\frac{{{d^m}\left( {\exp \left( { - {{{\gamma _{th}}x{\delta ^2}} \mathord{\left/
 {\vphantom {{{\gamma _{th}}x{\delta ^2}} {{P_{ap}}}}} \right.
 \kern-\nulldelimiterspace} {{P_{ap}}}}} \right){\mathcal{L}_{I{n_{ap,sk}}}}\left( {{\gamma _{th}}x} \right)} \right)}}{{d{x^m}}}} \right|_{x = {r^\beta }}} .
\end{align}
Since $I{n_{ap,sk}} = {\sum _{j \in {\Phi _{ap,a}}\backslash \left\{ {a{p_0}} \right\}}}{\left| {{{\bf{g}}_{j,s{k_0}}}\frac{{{{\bf{h}}_{j,s{k_j}}}^\dag }}{{\left\| {{{\bf{h}}_{j,s{k_j}}}} \right\|}}} \right|^2}{\left| {{X_{j,s{k_0}}}} \right|^{ - \beta }}$, using the generating functionnal of HPPP and  ${\left| {{{\bf{g}}_{j,S{k_0}}}\frac{{{{\bf{h}}_{j,s{k_j}}}^\dag }}{{\left\| {{{\bf{h}}_{j,s{k_j}}}} \right\|}}} \right|^2} \sim \exp \left( 1 \right)$ , we derive the Laplace transform of $I{n_{ap,sk}}$ as
\begin{align}\label{LAP_inapsk}
{\mathcal{L}_{I{n_{ap,sk}}}}\left( s \right) =\exp \left\{ { - {\lambda _{ap}}{\rho _{ap}}\pi \Gamma \left( {1 + {2 \mathord{\left/
 {\vphantom {2 \beta }} \right.
 \kern-\nulldelimiterspace} \beta }} \right)\Gamma \left( {1 - {2 \mathord{\left/
 {\vphantom {2 \beta }} \right.
 \kern-\nulldelimiterspace} \beta }} \right){s^{{2 \mathord{\left/
 {\vphantom {2 \beta }} \right.
 \kern-\nulldelimiterspace} \beta }}}} \right\}.
\end{align}

Substituting \eqref{LAP_inapsk} into \eqref{CDFSK_D3}, we obtain
\begin{align}\label{CDFSK_D4}
&\Pr \Bigg[ {\frac{{{{\left\| {{{\bf{g}}_{a{p_0},s{k_0}}}} \right\|}^2}{r^{ - \beta }}}}{{I{n_{ap,sk}} + {{{\delta ^2}} \mathord{\left/
 {\vphantom {{{\delta ^2}} {{P_{ap}}}}} \right.
 \kern-\nulldelimiterspace} {{P_{ap}}}}}} \le {\gamma _{th}}} \Bigg]=1 - \exp \Big\{ { - {\lambda _{ap}}{\rho _{ap}}\pi}\nonumber\\
& \Gamma \left( {1 + {2 \mathord{\left/
 {\vphantom {2 \beta }} \right.
 \kern-\nulldelimiterspace} \beta }} \right) {\Gamma \left( {1 - {2 \mathord{\left/
 {\vphantom {2 \beta }} \right.
 \kern-\nulldelimiterspace} \beta }} \right){{\left( {{\gamma _{th}}} \right)}^{{2 \mathord{\left/
 {\vphantom {2 \beta }} \right.
 \kern-\nulldelimiterspace} \beta }}}{r^2} - {{{\gamma _{th}}{r^\beta }{\delta ^2}} \mathord{\left/
 {\vphantom {{{\gamma _{th}}{r^\beta }{\delta ^2}} {{P_{ap}}}}} \right.
 \kern-\nulldelimiterspace} {{P_{ap}}}}} \Big\}\nonumber\\
& - \sum\limits_{m = 1}^{M - 1} {\frac{{{{\left( {{r^\beta }} \right)}^m}}}{{m!{{\left( { - 1} \right)}^m}}}{{\left. {\frac{{{d^m}\left( {U\left( x \right)} \right)}}{{d{x^m}}}} \right|}_{x = {r^\beta }}}}
\end{align}
with $U\left( x \right) =  \exp\Big\{ { - {\lambda _{ap}}{\rho _{ap}}\pi \Gamma \left( {1 + {2 \mathord{\left/
 {\vphantom {2 \beta }} \right.
 \kern-\nulldelimiterspace} \beta }} \right)\Gamma \left( {1 - {2 \mathord{\left/
 {\vphantom {2 \beta }} \right.
 \kern-\nulldelimiterspace} \beta }} \right){{\left( {{\gamma _{th}}x} \right)}^{{2 \mathord{\left/
 {\vphantom {2 \beta }} \right.
 \kern-\nulldelimiterspace} \beta }}}}$\\$
- {{{\gamma _{th}}x{\delta ^2}} \mathord{\left/
 {\vphantom {{{\gamma _{th}}x{\delta ^2}} {{P_{ap}}}}} \right.
 \kern-\nulldelimiterspace} {{P_{ap}}}} \Big\}$.

 We then apply the Fa$\grave{a}$
 di Bruno's formula to solve the derivative of $m$th order as follows:
\begin{align}\label{CDFSK_D6}
&{\left. {\frac{{{d^m}\left[ {\exp \left( {U\left( x \right)} \right)} \right]}}{{d{x^m}}}} \right|_{x = {r^\beta }}} ={\sum {\frac{1}{{\prod\limits_{l = 1}^m {{m_l}!l{!^{{m_l}}}} }}} }\exp\Big\{{ - {\lambda _{ap}}{\rho _{ap}}\pi }\nonumber\\
& {\Gamma \left( {1 + {2 \mathord{\left/
 {\vphantom {2 \beta }} \right.
 \kern-\nulldelimiterspace} \beta }} \right)\Gamma \left( {1 - {2 \mathord{\left/
 {\vphantom {2 \beta }} \right.
 \kern-\nulldelimiterspace} \beta }} \right){{\left( {{\gamma _{th}}} \right)}^{{2 \mathord{\left/
 {\vphantom {2 \beta }} \right.
 \kern-\nulldelimiterspace} \beta }}}{r^2} - {{{\gamma _{th}}{r^\beta }{\delta ^2}} \mathord{\left/
 {\vphantom {{{\gamma _{th}}{r^\beta }{\delta ^2}} {{P_{ap}}}}} \right.
 \kern-\nulldelimiterspace} {{P_{ap}}}}}\Big\}\Big[  - {\lambda _{ap}} \nonumber\\
&{\rho _{ap}}\pi\frac{2}{\beta }\Gamma \left( {1 + {2 \mathord{\left/
 {\vphantom {2 \beta }} \right.
 \kern-\nulldelimiterspace} \beta }} \right)\Gamma \left( {1 - {2 \mathord{\left/
 {\vphantom {2 \beta }} \right.
 \kern-\nulldelimiterspace} \beta }} \right){{\left( {{\gamma _{th}}} \right)}^{{2 \mathord{\left/
 {\vphantom {2 \beta }} \right.
 \kern-\nulldelimiterspace} \beta }}}{x^{{2 \mathord{\left/
 {\vphantom {2 \beta }} \right.
 \kern-\nulldelimiterspace} \beta } - 1}} - {{{\gamma _{th}}{\delta ^2}} \mathord{\left/
 {\vphantom {{{\gamma _{th}}{\delta ^2}} {{P_{ap}}}}} \right.
 \kern-\nulldelimiterspace} {{P_{ap}}}}\Big]^{{m_1}}\nonumber\\
 & \prod\limits_{l = 2}^m \Big[ - {\lambda _{ap}}{\rho _{ap}}\pi \Gamma \left( {1 + {2 \mathord{\left/
 {\vphantom {2 \beta }} \right.
 \kern-\nulldelimiterspace} \beta }} \right)\Gamma \left( {1 - {2 \mathord{\left/
 {\vphantom {2 \beta }} \right.
 \kern-\nulldelimiterspace} \beta }} \right){{\left( {{\gamma _{th}}} \right)}^{{2 \mathord{\left/
 {\vphantom {2 \beta }} \right.
 \kern-\nulldelimiterspace} \beta }}} \prod\limits_{j = 0}^{l - 1} \left( {{2 \mathord{\left/
 {\vphantom {2 \beta }} \right.
 \kern-\nulldelimiterspace} \beta } - j} \right) \Big.
 \nonumber\\
 &
 \Bigg.
 {{x^{{2 \mathord{\left/
 {\vphantom {2 \beta }} \right.
 \kern-\nulldelimiterspace} \beta } - l}}}  \Big]^{{m_l}}.
 \end{align}

Based on \eqref{CDFSK_D6}, \eqref{CDFSK_D4}, and \eqref{CDFSK_D1}, we derive the CDF of $\gamma_{sk}$ in \eqref{CDFSK_Sink_2hop}.

\section{  Proof of Lemma 4}\label{aplemma4}

From \eqref{SINR_AP_EVE}, the CDF of $\gamma_{ap,e}$ is given by
\begin{align}\label{CDFse_D1}
{F_{{\gamma _{s,e}}}}\left( {{\gamma _{th}}} \right) & ={\mathbbm{E}_{{\Phi _{ap,a}}}}\Bigg\{ {{\mathbbm{E}_{{\Phi _{ap,e}}}}\Bigg\{ {\prod\limits_{e{\Phi _{ap,e}}} {\Pr \left\{ {\frac{{{{\left| {{g_{a{p_0},{e_k}}}} \right|}^2}}}{{I{n_{ap,e}} + {{{\sigma ^2}} \mathord{\left/
 {\vphantom {{{\sigma ^2}} {{P_{ap}}}}} \right.
 \kern-\nulldelimiterspace} {{P_{ap}}}}}}} \right.} } }
\nonumber \\
  &  \hspace{-1.2cm}   { {\Bigg. {{{\left| {{X_{a{p_0},{e_k}}}} \right|}^{ - \beta }} \le {\gamma _{th}}} \Bigg|{\Phi _{ap,a}},{\Phi _{ap,e}}} \Bigg\}} \Bigg\}
 \nonumber \\
 & \hspace{-1.6cm}   \mathop  = \limits^{\left( a \right)}\exp \Big\{ { - \lambda _e^{ap}\int_{{R^2}} {{e^{ - {{{\sigma ^2}{\gamma _{th}}{{\left| {{X_{a{p_0},{e_k}}}} \right|}^\beta }} \mathord{\left/
 {\vphantom {{{\sigma ^2}{\gamma _{th}}{{\left| {{X_{a{p_0},{e_k}}}} \right|}^\beta }} {{P_{ap}}}}} \right.
 \kern-\nulldelimiterspace} {{P_{ap}}}}}}} }
 \nonumber \\
 & \hspace{-1.2cm}    {{\mathcal{L}_{I{n_{ap,e}}}}\Big( {{\gamma _{th}}{{\left| {{X_{a{p_0},{e_k}}}} \right|}^\beta }} \Big)de} \Big\}
 \nonumber \\
 & \hspace{-1.6cm}
\mathop  = \limits^{\left( b \right)} \exp \Big\{ { - 2\pi \lambda _e^{ap}\int_0^\infty  {{e^{ - {{{\sigma ^2}{\gamma _{th}}{r^\beta }} \mathord{\left/
 {\vphantom {{{\sigma ^2}{\gamma _{th}}{r^\beta }} {{P_{ap}}}}} \right.
 \kern-\nulldelimiterspace} {{P_{ap}}}}}}{\mathcal{L}_{I{n_{ap,e}}}}\left( {{\gamma _{th}}{r^\beta }} \right)rdr} } \Big\},
\end{align}
where $(a)$ follows from the generating functionnal of HPPP in \cite{stoyanstochastic}, $(b)$ is obtained by converting cartesian coordinates to polar coordinates.

Using the generating functionnal of HPPP in \cite{stoyanstochastic}, we derive the Laplace transform of $I_{ap,e}$ as
 \begin{align}\label{LAP_ape}
{\mathcal{L}_{{I_{ap,e}}}}\left( s \right)  = \exp \left\{ { - {\lambda _{ap}}{\rho _{ap}}\pi \Gamma \left( {1 + {2 \mathord{\left/
 {\vphantom {2 \beta }} \right.
 \kern-\nulldelimiterspace} \beta }} \right)\Gamma \left( {1 - {2 \mathord{\left/
 {\vphantom {2 \beta }} \right.
 \kern-\nulldelimiterspace} \beta }} \right){s^{{2 \mathord{\left/
 {\vphantom {2 \beta }} \right.
 \kern-\nulldelimiterspace} \beta }}}} \right\}.
\end{align}

Plugging \eqref{LAP_ape} into \eqref{CDFse_D1}, we derive the CDF of $\gamma_{s,e}$ in \eqref{CDFSK_E}.

\bibliographystyle{IEEEtran}
\bibliography{mybib}

\balance
\end{document}